\DeclareMathOperator*{\argmin}{argmin}
\newtheorem{thm}{Theorem}
\newtheorem{lemma}{Lemma}
\newtheorem{assum}{Assumption}
\newtheorem{definition}{Definition}
	\newcolumntype{H}{>{\setbox0=\hbox\bgroup}c<{\egroup}@{}}
	\newcommand{\rotpar}[2][2.2cm]{\rotatebox{70}{\parbox{#1}{\raggedright #2}}}
\begin{document}

\title{Sharp Threshold Detection Based on Sup-norm Error rates in High-dimensional Models}
\author{\textsc{Laurent Callot\thanks{VU University Amsterdam, Department of Econometrics and Operations Research, CREATES - Center for Research in Econometric Analysis of Time Series (DNRF78) funded by the Danish National Research Foundation, and the Tinbergen Institute.}}\and \textsc{Mehmet Caner\thanks{%
North Carolina State University, Department of Economics, 4168 Nelson Hall,
Raleigh, NC 27695. Email: mcaner@ncsu.edu}} \and \textsc{Anders Bredahl Kock\thanks{%
Aarhus  University, Department of Economics and Business, and CREATES - Center for Research in Econometric Analysis of Time Series (DNRF78), funded by the Danish National Research Foundation. Fuglesangs Alle 4, 8210, Aarhus V Denmark. Email: akock@creates.au.dk
}} \and and \textsc{Juan Andres Riquelme\thanks{%
North Carolina State University, Department of Economics,}}}
\date{\today}

\maketitle

\begin{abstract}
We propose a new estimator, the thresholded scaled Lasso, in high dimensional threshold regressions. First, we establish an upper bound on the $\ell_\infty$ estimation error of the scaled Lasso estimator of \cite{lee2012lasso}. This is a non-trivial task as the literature on high-dimensional models has focused almost exclusively on $\ell_1$ and $\ell_2$ estimation errors. We show that this sup-norm bound can be used to distinguish between zero and non-zero coefficients at a much finer scale than would have been possible using classical oracle inequalities. Thus, our sup-norm bound is tailored to consistent variable selection via thresholding. 

Our simulations show that thresholding the scaled Lasso yields substantial improvements in terms of variable selection. Finally, we use our estimator to shed further empirical light on the long running debate on the relationship between the level of debt (public and private) and GDP growth.   

%This new estimator can detect  which variables are subject to regime shift in a threshold   regression with   probability approaching one, and simultaneously estimate the coefficients. We build on the very important and useful scaled lasso estimator of \cite{lee2012lasso}. Our main difference is model selection consistency. To get this result, we derive a new uniform bound for scaled lasso estimator of \cite{lee2012lasso}. Simulations show the good performance of our estimator. An empirical exercise is also conducted tying public debt/GDP ratios to growth rates of countries.
\vspace{0.1in}

\noindent \textit{Keywords and phrases}: Threshold model, sup-norm bound, thresholded scaled Lasso, oracle inequality, debt effect on GDP growth. 

\noindent \textit{\medskip \noindent JEL classification}: C13, C23, C26.
\end{abstract}

\thispagestyle{empty}\setcounter{page}{0}\newpage

%Potential improvements.
%\begin{enumerate}
%\item Find other examples where $\enVert{\Theta}_{\ell_\infty}$ is bounded (or increases slowly).  
%\item Verify assumption 5 in our random covariate setting instead of imposing the current high-level assumption.
%\end{enumerate}

\section{Introduction}
Threshold models have been heavily studied and used in the past twenty years or so. In econometrics the seminal articles by \cite{hansen1996} and \cite{hansen2000} showed that least squares estimation of threshold models is possible and feasible. 
These papers show how to test for the presence of a threshold and how to estimate the remaining parameters by least squares. Later, \cite{caner2004hansen} provided instrumental variable estimation of the threshold. These authors derived the limits for the threshold parameter in the reduced form as well as structural equations. 

There have been many applications of threshold models in cross-section data. One of the most recent ones is the analysis of the public debt to GDP ratio in a threshold regression model by \cite{fritzi2010}. In the context of time series we refer to the articles by \cite{caner2001hansen}, \cite{seo2006}, \cite{seo2008}, and \cite{hansen2002seo}. \cite{lin} considers the adaptive Lasso in a high dimensional quantile threshold model. In panel data, semi-parametrics, and least absolute deviation models, \cite{hansen1999}, \cite{linton2007seo}, \cite{caner2002}, respectively, made contributions. For applications to stock markets and exchange rates we refer to \cite{akdeniz2003} and \cite{basci2006}. These authors argue that threshold model can contribute to reducing forecast errors.  

To be precise, we shall study the model
\begin{equation}
 Y_i = X_i'\beta_0+X_i'\delta_01_{\cbr[0]{Q_i<\tau_0}}+U_i,\qquad i=1,...,n\label{0}
\end{equation}
where $\beta_0,\delta_0\in\mathbb{R}^m$ and $\tau_0$ determines the location of the threshold/break. $Q_i$ determines which regime we are in and could be the debt level in a growth regression or education in a wage regression. If $\delta_0=0$, there is no break and $\tau_0$ is not identified. In that case the model is linear. In a very insightful recent paper \cite{lee2012lasso} proved finite sample oracle inequalities for the prediction and estimation error of the (scaled) Lasso applied to (\ref{0}) in the case of fixed regressors and gaussian error terms. In their simulation section, they also extend their results to random regressors with Gaussian errors. Furthermore, they nicely showed that $\tau_0$ exhibits the well known super efficiency phenomenon from low dimensional break point models even in the high-dimensional case. These authors also show that the scaled Lasso does not select too many irrelevant variables in the spirit of \cite{bickel2009simultaneous}. However, their results are by no means trivial extensions of oracle inequalities for linear models as they show that the classical restricted eigenvalue condition must hold uniformly over the parameter space in threshold models. In addition, the probabilistic analysis is also much more refined than in the linear case. 

The aim of this paper is to show that it is possible to consistently decide whether a break is present or not even in the high-dimensional change point model with random regressors. In other words, we show that it is possible to decide whether $\delta_0=0$ or if it possesses non-zero entries. To do so efficiently, we first establish an upper bound on the sup-norm convergence rate of the estimator $\hat{\delta}$ of $\delta_0$ which is valid in even highly correlated designs. This is not an easy task as almost all previous work has focussed on establishing upper bounds on the $\ell_1$ or $\ell_2$ estimation error in the plain linear model. Exceptions are \cite{lounici2008sup} and \cite{van2014highnotes} who provide sup-norm bounds in the high-dimensional linear model. To the best of our knowledge, we are the first to establish sup-norm bounds on the estimation error in a high-dimensional non-linear model.  Our sup-norm bound is much smaller than the corresponding $\ell_1$ and $\ell_2$ bounds on the estimation error as it does \textit{not} depend on the unknown number of non-zero coefficients $s$. Thus, our approach to break detection, which is based on thresholding, allows for a much finer distinction between zero and non-zero entries of $\delta_0$. The result is that we can detect breaks which would be too small to detect if one thresholded based on classical $\ell_1$ or $\ell_2$ estimation error. In that sense, the sharp sup-norm bound is tailored to break detection in our context and we strengthen the result of selecting not too many irrelevant variables in the threshold model to selecting exactly the right ones with probability tending to one.   

The debate regarding the impact of debt on GDP growth was recently reignited by the European public debt crisis as well the claim by \cite{RR} that public debt has a substantial negative effect on future GDP growth when the ratio of debt to GDP is over 90\%. Following \cite{RR}, several authors have econometrically investigated the presence of such a threshold. Of particular interest for us is the work of \cite{BIS} who estimated threshold growth regressions using several measures of public and private debt as well as a set of standard controls. Using our thresholded Lasso estimator with the data of \cite{BIS} we find robust evidence of a threshold in the effect of debt on future GDP growth. However, the effect of debt being above the threshold appears to be complex.      

In Section \ref{scaledlasso}, we recall the scaled Lasso estimator for threshold models of \cite{lee2012lasso}. Section \ref{uniform} establishes $\ell_{\infty}$ norm bounds for the estimation error of the scaled Lasso. This sup-norm bound is the basis for our new thresholded scaled Lasso estimator which is introduced in Section \ref{thold}. Section \ref{sims} provides simulations supporting the selection consistency of our estimator. Section \ref{application} reports the results of our growth regressions. All proofs are deferred to the appendix.

\subsection{Notation}
For any vector $x\in\mathbb{R}^k$ (for some $k\geq 1$), let $\enVert[0]{x}_{\ell_1}, \enVert[0]{x}_{\ell_2}$ and $\enVert[0]{x}_{\ell_\infty}$ denote the $\ell_1, \ell_2$ and $\ell_\infty$ norms, respectively. Similarly, for any $m\times n$ matrix $A$, $\enVert[0]{A}_{\ell_1}, \enVert[0]{A}_{\ell_2}$ and $\enVert[0]{A}_{\ell_\infty}$ denote the induced (operator) norms corresponding to the above three norms. They can be calculated as $\enVert[0]{A}_{\ell_1}=\max_{1\leq j\leq n}\sum_{i=1}^m|A_{i,j}|$, $\enVert[0]{A}_{\ell_2}=\sqrt{\phi_{\max}(A'A)}$ where $\phi_{\max}(\cdot)$ is the maximal eigenvalue, and $\enVert[0]{A}_{\ell_\infty}=\max_{1\leq i\leq m}\sum_{j=1}^n|A_{i,j}|$, respectively. We will also need $\enVert[0]{A}_\infty=\max_{i,j}|A_{i,j}|$ where the maximum extends over all entries of $A$. For real numbers $a,b$ $a\vee b$ and $a\wedge b$ denote their maximum and minimum, respectively. Furthermore, the empirical norm of $y\in \mathbb{R}^n$ is given by $\enVert[0]{y}_n=\sqrt{\frac{1}{n}\sum_{i=1}^ny_i^2}$. 

We shall say that a real random variable $Z$ is subgaussian if there exists positive constants $A$ and $B$ such that $P\del[0]{|Z|>\tau}\leq Ae^{-Bt^2}$ for all $\tau>0$. $Z$ is said to be subexponential if there exists positive constants $C$ and $D$ such that $P\del[0]{|Z|>\tau}\leq Ce^{-Dt}$ for all $\tau>0$. For $x\in\mathbb{R}^k$, we will let $x^{(j)}$ denote its $j$th entry. Let "wpa1" denote with probability approaching one.

\section{Scaled Lasso for Threshold Regression}\label{scaledlasso}
Defining the $2m\times 1$ vectors $X_i(\tau)=\del[1]{X_i',X_i'1_{\cbr[0]{Q_i<\tau}}}'$ and $\alpha_0=(\beta_0',\delta_0')'$ one can rewrite (\ref{0}) as
\begin{equation}
 Y_i = X_i (\tau_0)' \alpha_0 + U_i,\label{1}\qquad i=1,...,n
\end{equation} 
where $\tau_0$ is supposed to be an element of a parameter space $T=\sbr[0]{t_0,t_1}\subset\mathbb{R}$ and $\alpha_0$ is supposed to belong to a parameter space $\mathcal{A}\subset\mathbb{R}^{2m}$. This is exactly the model that \cite{lee2012lasso} studied in the case where $m$ can be much larger than $n$. We shall be more specific about the probabilistic assumptions in Section \ref{ass}.  
Let $J(\alpha_0)=\cbr[0]{j=1,...,2m:\alpha_0\neq 0}$ be the indices of the non-zero coefficients with cardinality $|J(\alpha_0)|$. Denoting by $X(\tau)$ the $(n\times 2m)$ matrix whose rows are $X_i(\tau)'$, setting $Y=(Y_1,...,Y_n)'$, and $U=(U_1,...,U_n)$, (\ref{1}) can be written more compactly as
\begin{align*}
Y=X(\tau_0)\alpha+U
\end{align*}
Next, let $X^{(j)}(\tau)$ denote the $j$th column of $X(\tau)$ and define the $2m \times 2m$ diagonal matrix
\begin{align*}
D(\tau) = diag \{ \|X^{(j)}(\tau) \|_n, j=1,...,2m\}
\end{align*}
Now set
\begin{align*}
S_n (\alpha, \tau) = n^{-1} \sum_{i=1}^n \del[1]{Y_i - X_i' \beta - X_i' \delta 1_{\{ Q_i < \tau \}}}^2
= 
\| Y- X (\tau) \alpha \|_n^2,
\end{align*}
where $\alpha=(\beta',\delta')'\in\mathcal{A}$ and define the scaled $\ell_1$ penalty
\begin{align*}
\lambda \enVert[1]{D(\tau) \alpha}_{\ell_1} = \lambda \sum_{j=1}^{2m} \|X^{(j)} (\tau) \|_n |\alpha_j|,
\end{align*}
where $\lambda$ is a tuning parameter about which we shall be explicit later. With this notation in place we define for each $\tau\in T$
\begin{align}
\hat{\alpha} (\tau) = \argmin_{\alpha \in A} \cbr[1]{ S_n (\alpha, \tau) + 2 \lambda \enVert[1]{D(\tau) \alpha}_{\ell_1}}\label{s1}
\end{align}
and
\begin{align*}
\hat{\tau} = \argmin_{\tau \in T} \cbr[1]{S_n (\hat{\alpha} (\tau), \tau) + \lambda \enVert[1]{D(\tau) \hat{\alpha} (\tau)}_{\ell_1}}.
\end{align*}
To be precise, $\hat{\tau}$ is an interval and in accordance with \cite{lee2012lasso} we define the maximum of the interval as the estimator $\hat{\tau}$. For every $n$, it suffices in practice to search over $Q_1,...,Q_n$ as candidates for $\hat{\tau}$ as these are the points where $1_{\cbr[0]{Q_i<\tau}},\ i=1,...,n$ can change. Therefore, the estimator of $(\alpha_0, \tau_0)$ is defined as $(\hat{\alpha},\hat{\tau})=(\hat{\alpha}(\hat{\tau}), \hat{\tau})$.
 
Assuming fixed regressors and and gaussian error terms \cite{lee2012lasso}     established oracle inequalities for the prediction and $\ell_1$ estimation error of the Lasso estimator $\hat{\alpha}$. When a break is present they also established upper bounds on the estimation error of $\hat{\tau}$. We contribute by establishing oracle inequalities in the sup-norm for this non-linear model and show that we can consistently detect breaks that are as small as $\sqrt{\frac{\log(m)}{n}}$.

\section{Uniform Convergence Rate of the Scaled Lasso Estimator}\label{uniform}
In this section we establish upper bounds on the sup norm estimation error $\enVert[0]{\hat{\alpha}-\alpha_0}_{\ell_\infty}$. As argued previously, and as will be made rigorous in Section \ref{thold}, an upper bound $\enVert[0]{\hat{\delta}-\delta_0}_{\ell_\infty}$ is what is really needed for break detection purposes. However, we shall actually establish a slightly stronger result here which also makes it possible to efficiently select variables from the first $m$ columns of $X(\tau_0)$. This sup-norm bound is established separately for the case where no break is present and for the case where a break is present.
Let $X$ and $Z(\tau)$ denote the first and last $m$ columns of $X(\tau)$ for $\tau\in T$, respectively, and define
\begin{align*}
r_n=\min_{1\leq j\leq m}\frac{\enVert[1]{Z^{(j)}(t_0)}_n^2}{\enVert[1]{X^{(j)}}_{n}^2}.
\end{align*}
Note that under Assumption 1 below it follows by Lemma \ref{scaling} in the appendix that $r_n$ is bounded away from zero with probability tending to one. $r_n$ is trivially never greater than one. Now define
\begin{align}
\lambda=A\del[3]{\frac{\log(3m)}{nr_n}}^{1/2}\label{lambda}
\end{align}
as the tuning parameter for a constant $A\geq 0$. Assuming an i.i.d. sample we let $\Sigma(\tau)=E\del[1]{X_1(\tau)X_1(\tau)'}$ denote the population covariance matrix of the covariates. In Lemma \ref{ThetaBound1} below we give sufficient conditions for its inverse $\Theta(\tau)$ to exist as long as $\Sigma=E(X_1X_1')$ is invertible which is a standard assumption in regression models. Thus, the practical consequence is that the presence of indicator functions in the definition of $X_1(\tau)$ does not make it singular. Now we introduce the assumptions that our theorems rely on.

\subsection{Assumptions}\label{ass}
In this section we recall the assumptions used by \cite{lee2012lasso} in their Theorems 2 and 3 which are used as ingredients in the proofs of our Theorems \ref{thm1} and \ref{thm2}. To be precise, we use the oracle inequalities for the $\ell_1$ estimation errors of $\hat{\alpha}$ and $\hat{\tau}$ provided by \cite{lee2012lasso}. 
We alter their assumptions slightly, as we are working in a random design as opposed to their fixed regressor design. However, \cite{lee2012lasso} have already argued how some of their assumptions could be valid in a random design and as a consequence we do note need to address these in detail.

\begin{assum}
Let $\cbr[0]{X_i, U_i, Q_i}_{i=1}^n$ be an i.i.d. sample and let $(X_1, U_1)$ be independent of $Q_1$. Furthermore, let $Q_1$ be uniformly distributed on $[0,1]$ and assume that all entries of $X_1$ and $U_1$ are subgaussian\footnote{The notation suppresses that we are really dealing with a triangular array. Thus, more precisely, we assume uniform subgaussianity across the rows of this triangular array.} with $\min_{1\leq j\leq m}E\del[1]{{X_1^{(j)}}^2}$ bounded away from zero.
(i) For the parameter space ${\cal A}$ for $\alpha_0$, any $\alpha \equiv (\alpha_1, \cdots, \alpha_{2m}) \in {\cal A} \subset \mathbb{R}^{2m}$, including $\alpha_0$, satisfies $\max_{1 \le j \le 2m }
|\alpha_j | \le C_1$, for some constant $C_1 >0$. In addition, $\tau_0 \in T=[t_0, t_1]$ with $0<t_0<t_1<1$. $(ii) \log(m)/n\to 0$.
\end{assum}

Assumption 1 is the one which has been altered the most compared to \cite{lee2012lasso} as the boundedness of certain norms of the covariates does no longer have to be assumed as this now follows directly from independence and subgaussianity of these. See Lemma \ref{scaling} in the appendix for details. Furthermore, the absence of ties among the $Q_i,\ i=1,..., n$ (as required in \cite{lee2012lasso}) follows in an almost sure sense from these being uniformly (and thus continuously) distributed. 

The assumption of the sample being i.i.d. can most likely be relaxed by exchanging the probabilisitic inequalities used in the appendix for ones allowing for weak dependences and/or heterogeneity. For convenience, we have also assumed that $X_1$ and $Q_1$ are independent. However, as the main contribution of this paper is to provide sup norm bounds for high-dimensional non-linear models as the first in the literature (to the best of our knowledge) we have chosen to keep the probabilistic framework simple in order not to suffocate the cardinal ideas in technicalities.

%{\bf Assumption 1}. (Lee etal (2014)). (i) For the parameter space ${\cal A}$ for $\alpha_0$, any $\alpha \equiv (\alpha_1, \cdots, \alpha_{2m}) \in {\cal A} \subset R^{2m}$, including $\alpha_0$, satisfies $\max_{1 \le j \le 2m }
%|\alpha_j | \le C_1$, for some constant $C_1 >0$. In addition, $\tau_0 \in T \equiv [\tau_l, \tau_u]$ that satisfies $min_{i=1,\cdots n} q_i < \tau_l < \tau_u < \max_{i=1,\cdots n} q_i$. (ii) There exist universal 
%constants $C_2 >0$ and $C_3 >0$ such that $\|X^{(j)} (\tau) \|_n \le C_2$, uniformly in j and $\tau \in T$, and $\|X^{(j)} (\tau_l)\|_n \ge C_3$ uniformly in j, where $j =1, \cdots , 2m$. (iii) There is no $i \neq j$
%such that $q_i = q_j$.

\begin{assum}
(Uniform Restricted Eigenvalue Condition). For some integer $s$ such that $1 \le s \le 2m$, a positive number $c_0$ and some set ${\cal S} \subset \mathbb{R}$, the following 
condition holds wpa1
\begin{align}
\kappa (s, c_0, {\cal S}) = \min_{\tau \in {\cal S}} \quad \min_{J_0 \subset \{1,..., 2m \}, |J_0 | \le s} \quad  \min_{\gamma \neq 0, |\gamma_{J_0}^c |_1 \le c_0 |\gamma_{J_0}|_1} \frac{ |X (\tau) \gamma |_2}{n^{1/2} |\gamma_{J_0}|_2}
>0.\label{recond}
\end{align}
\end{assum}
In the random design considered in this paper we require assumption 2 of    \cite{lee2012lasso} above to be valid with probability tending to one. However, this is an unnecessarily high-level assumption as it can often be verified by assuming that $\Sigma(\tau)$ satisfies the uniform restricted eigenvalue condition (which it does in particular when it has full rank --  as is in turns true under Assumption 1 if $\Sigma$ has full rank as argued on page A4 in \cite{lee2012lasso}) and by showing that $\frac{1}{n}X'(\tau)X(\tau)$ is uniformly close to $\Sigma(\tau)$. Mimicking the arguments on pages A3-A6 in \cite{lee2012lasso} it can be shown that (\ref{recond}) above holds with probability tending to one under our Assumption 1 as long as $\Sigma$ has full rank -- a rather innocent assumption. Thus, Assumption 2 is almost automatic under Assumption 1 and we shall use this in the statements of Theorems \ref{thm1} and \ref{thm2} below.

For the next assumption, define $f_{\alpha, \tau}(x,q) = x' \beta + x' \delta 1_{ \{ q < \tau\}}$, and $f_0(x,q) = x'\beta_0 + x' \delta_0 1_{ \{ q < \tau_0 \} }$ and let $m(\alpha)$ denote the number of non-zero elements of $\alpha$.

\begin{assum}
(Identifiability under Sparsity and Discontinuity of Regression). For a given $s \geq |J(\alpha_0)|$, and for any $\eta$ and $\tau$ such that $|\tau - \tau_0 | > \eta \ge \min_i |Q_i - \tau_0|$, and 
$ \alpha \in \{ \alpha: m(\alpha) \le s\}$ there exists a constant  $c>0$ such that, wpa1
\[ \| f_{\alpha, \tau} - f_0 \|_n^2 > c \eta ,\] 
\end{assum}

For this assumption \cite{lee2012lasso} (pages A7-A8) also provide sufficient conditions encompassing the assumptions made in Assumption 1 above.

\begin{assum}
(Smoothness of Design). For any $\eta > 0$, there exists a constant $C<\infty$ such that wpa1
\[ \sup_{1\leq j,k\leq m} \sup_{|\tau - \tau_0| < \eta } \frac{1}{n} \sum_{i=1}^n \envert[1]{X_i^{(j)}X_i^{(k)}} |1_{\{Q_i < \tau_0\}} - 1_{ \{ Q_i < \tau \}} | \le C \eta.\]
\end{assum}
\cite{lee2012lasso} argue that this is the case when the $Q_i$ are continuously distributed and $E\del[1]{\envert[1]{X_i^{(j)}X_i^{(k)}}|Q_i=\tau}$ is continuous and bounded in a neighborhood of $\tau_0$ for all $1\leq j,k\leq m$.
Note however, that the outer supremum in Assumption 4 above is taken over all $1\leq j,k\leq m$ as opposed to only $1\leq j\leq m$  in \cite{lee2012lasso} as $\envert[1]{X_i^{(j)}X_i^{(k)}} $ has replaced ${X_i^{(j)}}^2$. This slight strengthening of the assumption is needed to establish an $\ell_\infty$ bound on the estimation error of $\hat{\alpha}$ in the case where a structural break is present (Theorem \ref{thm2} below).

\begin{assum}
(Well defined second moments). For any $\eta$ such that $1/n \le \eta \le \eta_0$, $h_n^2(\eta)$ is bounded where wpa1
\[ h_n^2(\eta) = \frac{1}{2 n \eta} \sum_{i=\max\{1, [n(\tau_0 - \eta)]\}}^{\min\{ [n(\tau_0+\eta)],n\}} (X_i' \delta_0)^2,\]
where $[.]$ denotes the integer part of a real number. 
\end{assum}
Finally, we also need to impose the same technical regularity condition as \cite{lee2012lasso} which they denote Assumption 6 and present on page A23 of their paper. This assumption is satisfied asymptotically in our context when $s\enVert[0]{\delta_0}_{\ell_1}\sqrt{\frac{\log(m)}{n}}\to 0$. Since $\max_{1\leq j\leq m}\delta_{0,j}\leq C_1$ by Assumption 1 above this is in turns true when $s\envert[1]{J(\delta_0)}\log(m)^{1/2}/\sqrt{n}\to 0$. The latter assumption will be assumed in Theorem \ref{thm2} below (as we also need it for another purpose) and thus Assumption 6 in \cite{lee2012lasso} is automatic in our case.

\subsection{sup-norm rate of convergence of $\hat{\alpha}$}
We next turn to providing upper bounds on the $\ell_\infty$ estimation error of $\hat{\alpha}$. We distinguish between the case in which no break is present and the case in which a break is present.
\begin{thm}\label{thm1}
Suppose that $\delta_0=0$ and let Assumptions 1 be satisfied. Furthermore, let $|J(\alpha)|\leq s$, assume that $\Sigma$ has full rank and that $\Theta(\tau)=\Sigma^{-1}(\tau)$ satisfies $\sup_{\tau\in T}\enVert[1]{\Theta(\tau)}_{\ell_\infty}<\infty$. Then, choosing $\lambda$ as in (\ref{lambda}) and assuming $s\sqrt{\frac{\log(mn)}{n}}\to 0$, one has
\begin{align*}
\enVert[1]{\hat{\alpha}-\alpha_0}_{\ell_\infty}=O_p\del[3]{\sqrt{\frac{\log(m)}{n}}}=O_p(\lambda).
\end{align*}
Thus, a fortiori, we also have $\enVert[1]{\hat{\delta}-\delta_0}_{\ell_\infty}=O_p\del[3]{\sqrt{\frac{\log(m)}{n}}}=O_p(\lambda)$.
\end{thm}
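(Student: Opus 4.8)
The plan is to leverage the KKT conditions for the scaled Lasso together with the $\ell_1$ oracle inequality of \cite{lee2012lasso} in the no-break case, in the style of the sup-norm arguments of \cite{lounici2008sup} and \cite{van2014highnotes}. First I would write down the first-order conditions for $\hat\alpha=\hat\alpha(\hat\tau)$ from the optimization problem \eqref{s1}: on the active set, $\frac{2}{n}X^{(j)}(\hat\tau)'(Y-X(\hat\tau)\hat\alpha) = 2\lambda\|X^{(j)}(\hat\tau)\|_n\,\mathrm{sgn}(\hat\alpha_j)$, and more generally the sub-gradient bound $\|\frac{1}{n}D(\hat\tau)^{-1}X(\hat\tau)'(Y-X(\hat\tau)\hat\alpha)\|_\infty \le \lambda$. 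Substituting $Y = X(\tau_0)\alpha_0 + U = X(\hat\tau)\alpha_0 + (X(\tau_0)-X(\hat\tau))\alpha_0 + U$ and rearranging, I would obtain
\[
\tfrac{1}{n}X(\hat\tau)'X(\hat\tau)(\hat\alpha-\alpha_0) = \tfrac{1}{n}X(\hat\tau)'U + \tfrac{1}{n}X(\hat\tau)'(X(\tau_0)-X(\hat\tau))\alpha_0 - r,
\]
where $\|D(\hat\tau)^{-1}r\|_\infty \le \lambda$ componentwise in a suitable sense. Pre-multiplying by the population precision matrix $\Theta(\hat\tau)$ and adding and subtracting $\frac{1}{n}X(\hat\tau)'X(\hat\tau)$ against $\Sigma(\hat\tau)$ then gives
\[
\hat\alpha-\alpha_0 = \Theta(\hat\tau)\Big[\tfrac{1}{n}X(\hat\tau)'U + \text{(break-location error)} - r\Big] + \big(I - \Theta(\hat\tau)\tfrac{1}{n}X(\hat\tau)'X(\hat\tau)\big)(\hat\alpha-\alpha_0).
\]

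The key estimates are then, in order: (1) the uniform-in-$\tau$ bound $\sup_{\tau\in T}\|\Theta(\tau)\|_{\ell_\infty}<\infty$, which is assumed; (2) $\sup_{\tau\in T}\|\frac{1}{n}X(\tau)'U\|_\infty = O_p(\sqrt{\log(m)/n})$, proved by a union bound over the $2m$ coordinates and over the (effectively $n$) candidate values of $\tau$, using subgaussianity of the $X^{(j)}$ and $U$ and a maximal inequality for the resulting subexponential products — here the independence of $(X_1,U_1)$ from $Q_1$ in Assumption 1 makes the indicator factors harmless; (3) control of the spectral/sup-norm discrepancy $\sup_{\tau}\|\frac{1}{n}X(\tau)'X(\tau)-\Sigma(\tau)\|_\infty = O_p(\sqrt{\log(m)/n})$, again by subgaussian concentration plus a union bound, which together with $\|\Theta\|_{\ell_\infty}$ bounded and the $\ell_1$ oracle inequality $\|\hat\alpha-\alpha_0\|_{\ell_1}=O_p(s\sqrt{\log(m)/n})$ from \cite{lee2012lasso} shows the last "remainder" term is $O_p\big(s\log(m)/n\big)=o_p(\sqrt{\log(m)/n})$ under the assumed rate $s\sqrt{\log(mn)/n}\to 0$; and (4) the term $\frac{1}{n}X(\hat\tau)'(X(\tau_0)-X(\hat\tau))\alpha_0$, which in the $\delta_0=0$ case is identically zero (or negligible), since with no break $X(\tau_0)\alpha_0 = X\beta_0$ does not involve the threshold at all — this is precisely why Theorem \ref{thm1} is cleaner than Theorem \ref{thm2}. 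Collecting (1)–(4) yields $\|\hat\alpha-\alpha_0\|_{\ell_\infty} \le \|\Theta\|_{\ell_\infty}\cdot O_p(\lambda) + o_p(\lambda) = O_p(\sqrt{\log(m)/n})$, and the statement for $\hat\delta$ follows a fortiori since $\hat\delta$ is a subvector of $\hat\alpha$.

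The main obstacle I anticipate is handling the \emph{random, data-dependent} threshold $\hat\tau$ uniformly: the KKT conditions hold at $\hat\tau$, but $\hat\tau$ is correlated with $U$ through the whole optimization, so steps (2) and (3) must be proved as suprema over all $\tau$ in the (random but finite) candidate set $\{Q_1,\dots,Q_n\}$ rather than pointwise — this is where the extra $\log n$ inside $\log(mn)$ in the sparsity rate enters, and it is the analogue of the refined probabilistic analysis \cite{lee2012lasso} needed relative to the linear case. A secondary technical point is that $r_n$ appearing in $\lambda$ is random, so one first invokes Lemma \ref{scaling} to replace $r_n$ by a deterministic constant bounded away from zero wpa1, after which $\lambda \asymp \sqrt{\log(3m)/n}$ and all the above rates line up. Everything else — the subgaussian maximal inequalities, the invertibility of $\Sigma(\tau)$ via Lemma \ref{ThetaBound1}, and the passage from $\ell_1$ to $\ell_\infty$ on the remainder — is routine given the assumptions.
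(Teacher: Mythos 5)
Your proposal follows essentially the same route as the paper's proof: the KKT conditions, the substitution $Y=X(\hat\tau)\alpha_0+U$ (exact when $\delta_0=0$), premultiplication by $\Theta(\hat\tau)$ after adding and subtracting $\Sigma(\hat\tau)$, and then bounding the empirical-process term uniformly in $\tau$, the Gram-matrix discrepancy times the $\ell_1$ oracle bound, and the subgradient term of order $\lambda$. The details you flag (uniformity over the candidate set for $\hat\tau$, the role of $r_n$ via Lemma \ref{scaling}, and the rate condition $s\sqrt{\log(mn)/n}\to 0$ killing the remainder) are exactly the ones the paper handles, so the argument is correct and not a genuinely different approach.
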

Theorem \ref{thm1} provides the stochastic order of the $\ell_\infty$ estimation error of $\hat{\alpha}$ for the case where no break is present. From Theorem 1 in \cite{lee2012lasso} (ignoring that their results are for non-random regressors) one can conclude that $\enVert[0]{\hat{\alpha}-\alpha_0}_{\ell_1}=O_p\del[1]{s\sqrt{\log(m)/n}}$. From this, one can of course also conclude that $\enVert[0]{\hat{\alpha}-\alpha_0}_{\ell_\infty}\leq \enVert[0]{\hat{\alpha}-\alpha_0}_{\ell_1}=O_p\del[1]{s\sqrt{\log(m)/n}}$. However, our Theorem \ref{thm1} shows that this rate is much too large as $s$ may be almost as large as $O(\sqrt{n})$ without obstructing $\ell_1$ norm consistency. Our much smaller bound will allow for more precise thresholding in Section \ref{thold}.

We stress again that almost all research in high-dimensional models so far has focussed exclusively on providing upper bounds on the $\ell_1$ and $\ell_2$. $\ell_\infty$ bounds on the estimation error have been established for the Lasso in the plain linear regression model by \cite{lounici2008sup} and \cite{van2014highnotes}. However, to the best of our knowledge we are the first to establish sup-norm bounds for high-dimensional non-linear models, and certainly in the threshold model. As we shall see below, a sup-norm bound will yield much more precise variable selection results for the thresholded scaled Lasso than thresholding based on $\ell_1$ or $\ell_2$ bounds since the latter two are larger due to the presence of the unknown sparsity $s$. Next, consider the case where $\delta_0\neq 0$, i.e. a break is present.

\begin{thm}\label{thm2}
Suppose that $\delta_0\neq 0$ and let Assumptions 1 and 3-5 be satisfied. Furthermore, let $|J(\alpha)|\leq s$, assume that $\Sigma$ has full rank and that $\enVert[1]{\Theta(\tau_0)}_{\ell_\infty}<\infty$. Then, choosing $\lambda$ as in (\ref{lambda}) and assuming $s\envert[1]{J(\delta_0)}\sqrt{\frac{\log(m)}{n}}\to 0$, one has
\begin{align*}
\enVert[1]{\hat{\alpha}-\alpha_0}_{\ell_\infty}=O_p\del[2]{\sqrt{\frac{\log(m)}{n}}}.
\end{align*}
Thus, a fortiori, we also have $\enVert[1]{\hat{\delta}-\delta_0}_{\ell_\infty}=O_p\del[3]{\sqrt{\frac{\log(m)}{n}}}=O_p(\lambda)$.
\end{thm}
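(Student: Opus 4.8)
The plan is to mirror the argument behind Theorem~\ref{thm1} — inverting the empirical Gram matrix through the population precision matrix — but now exploiting that, in contrast to the no-break case, the threshold estimator $\hat\tau$ is consistent for $\tau_0$ by the results of \cite{lee2012lasso}. This lets me localise to a shrinking neighbourhood of $\tau_0$ and work with the single matrix $\Theta(\tau_0)=\Sigma(\tau_0)^{-1}$ — which exists by Lemma~\ref{ThetaBound1} and has $\enVert[1]{\Theta(\tau_0)}_{\ell_\infty}<\infty$ by hypothesis — rather than with the whole family $\{\Theta(\tau):\tau\in T\}$. Writing $\hat\Sigma(\tau)=\frac{1}{n}X(\tau)'X(\tau)$, I would first combine the Karush--Kuhn--Tucker conditions for the scaled Lasso at $\hat\alpha=\hat\alpha(\hat\tau)$ with $Y=X(\tau_0)\alpha_0+U$ to get
\begin{equation*}
\hat\Sigma(\hat\tau)(\hat\alpha-\alpha_0)=\frac{1}{n}X(\hat\tau)'\bigl(X(\tau_0)-X(\hat\tau)\bigr)\alpha_0+\frac{1}{n}X(\hat\tau)'U-\lambda D(\hat\tau)\hat\kappa=:T_1+T_2-T_3,
\end{equation*}
where $\hat\kappa$ is a subgradient vector with $\max_{1\le j\le 2m}|\hat\kappa_j|\le 1$, and then use the identity $\hat\alpha-\alpha_0=\Theta(\tau_0)\hat\Sigma(\hat\tau)(\hat\alpha-\alpha_0)+\Theta(\tau_0)\bigl(\Sigma(\tau_0)-\hat\Sigma(\hat\tau)\bigr)(\hat\alpha-\alpha_0)$ to arrive at
\begin{equation*}
\enVert[0]{\hat\alpha-\alpha_0}_{\ell_\infty}\le\enVert[1]{\Theta(\tau_0)}_{\ell_\infty}\Bigl(\enVert[0]{T_1}_\infty+\enVert[0]{T_2}_\infty+\enVert[0]{T_3}_\infty+\enVert[1]{\Sigma(\tau_0)-\hat\Sigma(\hat\tau)}_\infty\,\enVert[0]{\hat\alpha-\alpha_0}_{\ell_1}\Bigr),
\end{equation*}
so it suffices to control the four factors in brackets.

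Three of these are routine. The penalty term obeys $\enVert[0]{T_3}_\infty\le\lambda\max_{1\le j\le 2m}\enVert[0]{X^{(j)}(\hat\tau)}_n=O_p(\lambda)$ by Lemma~\ref{scaling}. For the noise term $T_2=\frac{1}{n}X(\hat\tau)'U$, the first $m$ coordinates do not involve $\tau$ and Bernstein's inequality with a union bound over the $m$ subexponential products $X_i^{(j)}U_i$ bounds them by $O_p(\sqrt{\log(m)/n})$; the last $m$ coordinates carry the extra factor $1_{\{Q_i<\hat\tau\}}$, and a maximal inequality over the VC class $\{1_{\{Q<\tau\}}:\tau\in T\}$ gives $\sup_{\tau\in T}\enVert[0]{\frac{1}{n}X(\tau)'U}_\infty=O_p(\sqrt{\log(m)/n})$, just as in the proof of Theorem~\ref{thm1}; since $r_n$ is bounded away from zero wpa1, $\lambda\asymp\sqrt{\log(m)/n}$ and both terms are $O_p(\lambda)$. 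For the last factor, split $\enVert[1]{\Sigma(\tau_0)-\hat\Sigma(\hat\tau)}_\infty\le\enVert[1]{\Sigma(\tau_0)-\hat\Sigma(\tau_0)}_\infty+\enVert[1]{\hat\Sigma(\tau_0)-\hat\Sigma(\hat\tau)}_\infty$; the first summand is $O_p(\sqrt{\log(m)/n})$ by a union bound over the $(2m)^2$ subexponential entries, and together with $\enVert[0]{\hat\alpha-\alpha_0}_{\ell_1}=O_p(s\sqrt{\log(m)/n})$ from the $\ell_1$ oracle inequality of \cite{lee2012lasso} this contributes $O_p(s\log(m)/n)=o_p(\sqrt{\log(m)/n})$, using $s\sqrt{\log(m)/n}\to 0$ (which is dominated by $s|J(\delta_0)|\sqrt{\log(m)/n}\to 0$).

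The hard part is the break-misspecification contributions, namely $\enVert[0]{T_1}_\infty$ and the leftover piece $\enVert[1]{\hat\Sigma(\tau_0)-\hat\Sigma(\hat\tau)}_\infty$ of the last factor, both of which vanish only through $|\hat\tau-\tau_0|$. The $j$-th entry of $T_1$ equals $\frac{1}{n}\sum_{i=1}^n X_i(\hat\tau)^{(j)}(X_i'\delta_0)\Delta_i$ with $\Delta_i:=1_{\{Q_i<\tau_0\}}-1_{\{Q_i<\hat\tau\}}$, supported only on observations whose $Q_i$ lies between $\hat\tau$ and $\tau_0$. Using $|X_i(\hat\tau)^{(j)}|\le|X_i^{(k)}|$ for the relevant $k\le m$ and Cauchy--Schwarz, I would bound each such entry by
\begin{equation*}
\Bigl(\frac{1}{n}\sum_{i=1}^n\bigl(X_i^{(k)}\bigr)^2|\Delta_i|\Bigr)^{1/2}\Bigl(\frac{1}{n}\sum_{i=1}^n(X_i'\delta_0)^2|\Delta_i|\Bigr)^{1/2},
\end{equation*}
where, after localising to $|\hat\tau-\tau_0|<\eta$ (legitimate wpa1 by consistency of $\hat\tau$), the first factor is $O_p(|\hat\tau-\tau_0|^{1/2})$ by the Smoothness of Design assumption (Assumption~4 — this is exactly why that assumption was strengthened to cover all index pairs $j,k$) and the second is $O_p(|\hat\tau-\tau_0|^{1/2})$ by the Well-defined-second-moments assumption (Assumption~5, boundedness of $h_n^2$); the same use of Assumption~4 gives $\enVert[1]{\hat\Sigma(\tau_0)-\hat\Sigma(\hat\tau)}_\infty=O_p(|\hat\tau-\tau_0|)$. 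Hence $\enVert[0]{T_1}_\infty=O_p(|\hat\tau-\tau_0|)$ and, using $\enVert[0]{\hat\alpha-\alpha_0}_{\ell_1}=O_p(s\sqrt{\log(m)/n})$ again, the leftover part of the last factor is $O_p\bigl(s|\hat\tau-\tau_0|\sqrt{\log(m)/n}\bigr)$. Finally I would invoke the rate of convergence of $\hat\tau$ from Theorem~3 of \cite{lee2012lasso} — applicable here because, as noted after Assumption~5 above, that paper's Assumptions through their Assumption~6 hold under our hypotheses — which under $s|J(\delta_0)|\sqrt{\log(m)/n}\to 0$ yields $|\hat\tau-\tau_0|=o_p(\sqrt{\log(m)/n})$, so that $\enVert[0]{T_1}_\infty$ and the leftover contribution are both $o_p(\sqrt{\log(m)/n})$. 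Collecting the bounds, $\enVert[0]{\hat\alpha-\alpha_0}_{\ell_\infty}=\enVert[1]{\Theta(\tau_0)}_{\ell_\infty}\cdot O_p(\lambda)+o_p(\sqrt{\log(m)/n})=O_p(\sqrt{\log(m)/n})$, and the statement for $\hat\delta$ follows since $\hat\delta$ is the subvector consisting of the last $m$ coordinates of $\hat\alpha$. I expect the control of $T_1$ and of $\hat\Sigma(\tau_0)-\hat\Sigma(\hat\tau)$ to be the bottleneck: it is where the extra smoothness and identification assumptions and the threshold rate of \cite{lee2012lasso} all have to be brought to bear, and it is what forces the stronger rate condition $s|J(\delta_0)|\sqrt{\log(m)/n}\to 0$ and the additional Assumptions~3--5 relative to Theorem~\ref{thm1}.
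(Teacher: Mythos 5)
Your proposal is correct and follows essentially the same route as the paper's proof: the same KKT-based decomposition, the same multiplication by $\Theta(\tau_0)$, the same splitting of $\Sigma(\tau_0)-\frac{1}{n}X'(\hat\tau)X(\hat\tau)$ at $\tau_0$, and the same invocation of Theorem 3 of \cite{lee2012lasso} for both the $\ell_1$ rate of $\hat\alpha$ and the rate $|\hat\tau-\tau_0|=O_p(s\log(m)/n)$. The only local deviation is your treatment of $T_1$ by Cauchy--Schwarz together with Assumption 5, giving $O_p(|\hat\tau-\tau_0|)$; the paper instead bounds this term by $\sup_{1\leq j,k\leq m}\frac{1}{n}\sum_{i=1}^n|X_i^{(j)}X_i^{(k)}|\,|1_{\{Q_i<\tau_0\}}-1_{\{Q_i<\hat\tau\}}|\cdot\|\delta_0\|_{\ell_1}\leq KC_1 s|J(\delta_0)|\frac{\log(m)}{n}$ using only Assumptions 1 and 4 (with Assumption 5 entering solely as a hypothesis of Lee et al.'s Theorem 3) --- both variants land at $o_p\bigl(\sqrt{\log(m)/n}\bigr)$ under the stated rate condition.
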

The results of Theorem \ref{thm2} are similar to those in Theorem \ref{thm1} but the assumptions differ. First, $\enVert[1]{\Theta(\tau)}_{\ell_\infty}$ only has to be bounded at $\tau_0$ instead of uniformly over $T=[t_0,t_1]$ for $0<t_0<t_1<1$.  Lemma \ref{ThetaBound1} below shows that $\sup_{\tau\in T}\enVert[1]{\Theta(\tau)}_{\ell_\infty}<\infty$ and $\enVert[1]{\Theta(\tau_0)}_{\ell_\infty}<\infty$ in the equicorrelation design but of course with the former being no smaller than the latter. More importantly, requiring $s\envert[1]{J(\delta_0)}\log(m)^{1/2}/\sqrt{n}\to 0$ is in general more restrictive than requiring $s\sqrt{\frac{\log(mn)}{n}}\to 0$ as in Theorem \ref{thm1}. However, if the number of coefficient which break is bounded, i.e. $\envert[1]{J(\delta_0)}\leq B$ for an absolute constant $B$, then the rate requirement of Theorem \ref{thm2} is actually slightly weaker than the one in Theorem \ref{thm1}.

The following Lemma shows that even when the covariates are highly correlated, $\Sigma^{-1}$ exists and the assumptions $\sup_{\tau\in T}\enVert[1]{\Theta(\tau)}_{\ell_\infty}<\infty$ and $\enVert[1]{\Theta(\tau_0)}_{\ell_\infty}<\infty$ from Theorems \ref{thm1} and \ref{thm2}, respectively, are satisfied. First, recall the definition of an equicorrelation design.
\begin{definition}
We say that $\Sigma$ is an equicorrelation matrix if
\begin{align*}
\Sigma =
 \begin{pmatrix}
  1 & \rho & \cdots & \rho \\
  \rho & 1 & \cdots & \rho \\
  \vdots  & \vdots  & \ddots & \vdots  \\
  \rho & \rho & \cdots & 1
 \end{pmatrix}
\end{align*}
for some $-1<\rho<1$.
\end{definition}

\begin{lemma}\label{ThetaBound1}
Let $\cbr[1]{X_i,U_i}_{i=1}^n$ be an iid sample and assume that $U_1$ is uniformly distributed on $[0,1]$. Let $\Sigma=E(X_1X_1')$ be an $m\times m$ equicorrelation matrix with $0\leq\rho<1$. Then $\Sigma^{-1}$ exists and for all $\tau\in (0,1)$ one has $\enVert[1]{\Theta(\tau)}_{\ell_\infty}\leq \frac{2}{(1-\tau)(1-\rho)}\del[1]{2\vee \frac{\tau+1}{\tau}}$. If, furthermore, $T=[t_0,t_1]$ for some $0<t_0<t_1<1$, then $\sup_{\tau\in T}\enVert[1]{\Theta(\tau)}_{\ell_\infty}$ is bounded by a constant only depending on $\rho$. 
\end{lemma}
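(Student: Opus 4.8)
The plan is to exploit the product structure that the equicorrelation assumption, together with the uniform threshold variable being independent of $X_1$, forces on $\Sigma(\tau)$. Recalling $X_1(\tau)=(X_1',X_1'1_{\{Q_1<\tau\}})'$ and using $E1_{\{Q_1<\tau\}}=\tau$ together with independence of $Q_1$ and $X_1$, the four $m\times m$ blocks of $\Sigma(\tau)$ are $E(X_1X_1')=\Sigma$ in the top-left corner and $\tau\Sigma$ in the other three (using $1_{\{Q_1<\tau\}}^2=1_{\{Q_1<\tau\}}$), so $\Sigma(\tau)=B(\tau)\otimes\Sigma$ with $B(\tau)=\left(\begin{smallmatrix}1&\tau\\\tau&\tau\end{smallmatrix}\right)$. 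Since $\Sigma$ is equicorrelation with $0\le\rho<1$ its eigenvalues $1+\rho(m-1)$ and $1-\rho$ are strictly positive, so $\Sigma^{-1}$ exists; and $\det B(\tau)=\tau(1-\tau)>0$ for $\tau\in(0,1)$. Hence $\Sigma(\tau)$ is invertible and $\Theta(\tau)=B(\tau)^{-1}\otimes\Sigma^{-1}$.

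Next I would bound $\|\Theta(\tau)\|_{\ell_\infty}$ by factoring the maximal absolute row sum over the Kronecker product: in row $(i,k)$ the absolute row sum of $A\otimes C$ equals $\big(\sum_j|A_{ij}|\big)\big(\sum_l|C_{kl}|\big)$, so $\|A\otimes C\|_{\ell_\infty}=\|A\|_{\ell_\infty}\|C\|_{\ell_\infty}$. A direct inversion gives $B(\tau)^{-1}=\left(\begin{smallmatrix}\frac{1}{1-\tau}&-\frac{1}{1-\tau}\\ -\frac{1}{1-\tau}&\frac{1}{\tau(1-\tau)}\end{smallmatrix}\right)$, whose two absolute row sums are $\frac{2}{1-\tau}$ and $\frac{1}{1-\tau}+\frac{1}{\tau(1-\tau)}=\frac{\tau+1}{\tau(1-\tau)}$, i.e. $\|B(\tau)^{-1}\|_{\ell_\infty}=\frac{1}{1-\tau}\big(2\vee\frac{\tau+1}{\tau}\big)$ --- exactly the $\tau$-dependent factor in the claim. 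It then remains only to show $\|\Sigma^{-1}\|_{\ell_\infty}\le\frac{2}{1-\rho}$.

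For that I would apply the Sherman--Morrison formula to $\Sigma=(1-\rho)I+\rho\mathbf{1}\mathbf{1}'$, giving $\Sigma^{-1}=\frac{1}{1-\rho}I-\frac{\rho}{(1-\rho)(1+\rho(m-1))}\mathbf{1}\mathbf{1}'$; since $0\le\rho<1$ its diagonal entries are nonnegative, so its absolute row sum equals $\frac{1+\rho(2m-3)}{(1-\rho)(1+\rho(m-1))}$, and the elementary inequality $1+\rho(2m-3)\le 2\big(1+\rho(m-1)\big)$ (equivalently $-\rho\le1$) gives the bound $\frac{2}{1-\rho}$, uniformly in $m$. Multiplying the two factors yields $\|\Theta(\tau)\|_{\ell_\infty}\le\frac{2}{(1-\tau)(1-\rho)}\big(2\vee\frac{\tau+1}{\tau}\big)$. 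For the final assertion, on $T=[t_0,t_1]$ the factor $\frac{1}{1-\tau}$ is maximized at $t_1$ and, since $\frac{\tau+1}{\tau}>2$ for $\tau\in(0,1)$, the factor $2\vee\frac{\tau+1}{\tau}=\frac{\tau+1}{\tau}$ is maximized at $t_0$, so $\sup_{\tau\in T}\|\Theta(\tau)\|_{\ell_\infty}\le\frac{2}{(1-t_1)(1-\rho)}\cdot\frac{t_0+1}{t_0}$, a constant depending only on $\rho$ (and the fixed endpoints of $T$).

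The only mildly delicate points are spotting the Kronecker factorization $\Sigma(\tau)=B(\tau)\otimes\Sigma$ --- which is what makes the inversion and the $\ell_\infty$ row-sum bound separate cleanly --- and checking that the Sherman--Morrison row-sum bound for $\Sigma^{-1}$ is genuinely free of the dimension $m$, so that the resulting constant depends only on $\rho$. Beyond these, the argument is a short chain of explicit linear-algebra computations, so I anticipate no real obstacle.
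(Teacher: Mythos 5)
Your proof is correct and follows essentially the same route as the paper's: both exploit the block structure $\Sigma(\tau)=B(\tau)\otimes\Sigma$ (the paper obtains $\Theta(\tau)=\frac{1}{1-\tau}\bigl(\begin{smallmatrix}1&-1\\-1&1/\tau\end{smallmatrix}\bigr)\otimes\Theta$ via the partitioned-inverse formula, you via the Kronecker inverse identity), apply Sherman--Morrison to the equicorrelation matrix to get $\enVert[0]{\Sigma^{-1}}_{\ell_\infty}\leq\frac{2}{1-\rho}$ uniformly in $m$, and multiply the two row-sum factors. Your explicit appeal to $\enVert[0]{A\otimes C}_{\ell_\infty}=\enVert[0]{A}_{\ell_\infty}\enVert[0]{C}_{\ell_\infty}$ and your remark that the final constant also depends on the endpoints of $T$ are, if anything, slightly more careful than the paper's write-up.
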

 Lemma \ref{ThetaBound1} states that $\enVert[1]{\Theta(\tau)}_{\ell_\infty}$ is bounded for all $\tau\in(0,1)$ even when the correlation is arbitrarily close to, but different from, one. $\tau$ can not be zero or one since in that case $\Sigma(\tau)$ would be singular. From a modeling point of view this excludes breaks at the very endpoints of the sample which is a standard assumption in the literature.
 
\section{Thresholded Scaled Lasso}\label{thold}
In this section we utilize the $\ell_\infty$ bound established in Theorems \ref{thm1} and \ref{thm2} above to provide sharp thresholding results for the Scaled Lasso estimator. Recall that these theorems established that $\enVert[1]{\hat{\alpha}-\alpha_0}_{\ell_\infty}\leq C\lambda$ with arbitrarily large probability, irrespective of whether a break is present or not, by choosing $C$ sufficiently large. Before showing that the breaks can be revealed consistently we shall provide a slightly more general result stating that the truly zero coefficients can be distinguished from the non-zero ones. First, define the Thresholded Scaled Lasso estimator as 
\begin{align}
\tilde{\alpha}_j
=
\begin{cases} 
\hat{\alpha}_j \qquad &\text{if} \qquad |\hat{\alpha}_j|\geq H\\
0\qquad &\text{if} \qquad |\hat{\alpha}_j|< H\end{cases}\label{thresh}
\end{align}
where $H$ is the threshold determining whether a coefficient should be classified as zero or non-zero. In particular, we shall see that choosing $H=2C\lambda$ results in consistent model selection. Here we stress once more that our threshold is much sharper than what would have been obtainable if we had directly used that $\enVert[1]{\hat{\alpha}-\alpha_0}_{\ell_1}\leq Cs\lambda$ with probability tending to one from \cite{lee2012lasso}. Thus, it is important to have an $\ell_\infty$ bound on the estimation error as this allows for a much finer distinction between the zero and the non-zero coefficients than would been possible from the usual $\ell_1$ or $\ell_2$ bounds. To be precise, let $\alpha_{0j}$ be a nonzero coefficient such that $|\alpha_{0j}|/\lambda\to \infty$ but $|\alpha_{0j}|/(s\lambda)\to 0$. Not that there may be a considerable wedge between $|\alpha_{0j}|/\lambda$ and $|\alpha_{0j}|/(s\lambda)$ as $s$ can be almost as large as $\sqrt{n}$ such that this is a setting of practical relevance. Such an $\alpha_{0,j}$ will correctly be classified as non-zero when thresholding at the level $\lambda$ (resulting from an $\ell_\infty$ bound) while it would wrongly be classified as zero when thresholding at the level $s\lambda$ (resulting from a plain $\ell_1$ bound). This example underscores the importance of establishing $\ell_\infty$ bounds as in Theorems \ref{thm1} and \ref{thm2} prior to thresholding. Next, recall that $J(\alpha_0) = \{ j=1,...,2m: \alpha_{0j} \neq 0 \}$ and define $J(\tilde{\alpha})=\cbr[0]{j=1,....,2m: \tilde{\alpha}_{j}\neq 0}$. The following  theorems establish the properties of the thresholded scaled Lasso and rely crucially on the $\ell_\infty$ bounds on the estimation error established in Theorems \ref{thm1} and \ref{thm2} above.  
\begin{thm}\label{thmthresh}
Let the assumptions of Theorems \ref{thm1} and \ref{thm2} be satisfied and assume that $\min_{ j \in J(\alpha_0)} |\alpha_{0j}| > 3 C \lambda$. Then, for all $\epsilon>0$ there exists a $C$ such that for $H=2C\lambda=2C\sqrt{\frac{\log(m)}{n}}$ one has $P\del[1]{J(\tilde{\alpha})=J(\alpha_0)}\geq 1-\epsilon$.
\end{thm}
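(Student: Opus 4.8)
The plan is to obtain the result as an immediate consequence of the sup-norm bounds in Theorems \ref{thm1} and \ref{thm2}: the probabilistic heavy lifting has already been done there, and what remains is a deterministic triangle-inequality argument on a high-probability event. Fix $\epsilon>0$. If $\delta_0=0$ then Theorem \ref{thm1} applies, and if $\delta_0\neq 0$ then Theorem \ref{thm2} applies; in either case $\enVert[0]{\hat{\alpha}-\alpha_0}_{\ell_\infty}=O_p(\lambda)$ with $\lambda\asymp\sqrt{\log(m)/n}$, since $r_n$ is bounded away from zero wpa1 by Lemma \ref{scaling}. Hence there is a constant $C=C(\epsilon)>0$ such that, for all $n$ large, the event $\mathcal{E}_n=\cbr[0]{\enVert[0]{\hat{\alpha}-\alpha_0}_{\ell_\infty}\leq C\lambda}$ satisfies $P(\mathcal{E}_n)\geq 1-\epsilon$. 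It is this $C$ that is used both to define the threshold $H=2C\lambda$ and for which the separation (beta-min) condition $\min_{j\in J(\alpha_0)}|\alpha_{0j}|>3C\lambda$ is assumed. It therefore suffices to show that $J(\tilde{\alpha})=J(\alpha_0)$ on $\mathcal{E}_n$.

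Next I would verify the two inclusions on $\mathcal{E}_n$. No false negatives: for $j\in J(\alpha_0)$, on $\mathcal{E}_n$ the reverse triangle inequality gives $|\hat{\alpha}_j|\geq |\alpha_{0j}|-|\hat{\alpha}_j-\alpha_{0j}|\geq |\alpha_{0j}|-C\lambda>3C\lambda-C\lambda=2C\lambda=H$, so by the definition (\ref{thresh}) we have $\tilde{\alpha}_j=\hat{\alpha}_j\neq 0$, hence $j\in J(\tilde{\alpha})$. No false positives: for $j\notin J(\alpha_0)$, i.e. $\alpha_{0j}=0$, on $\mathcal{E}_n$ we have $|\hat{\alpha}_j|=|\hat{\alpha}_j-\alpha_{0j}|\leq C\lambda<2C\lambda=H$, so (\ref{thresh}) sets $\tilde{\alpha}_j=0$ and $j\notin J(\tilde{\alpha})$. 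Combining the two inclusions, $J(\tilde{\alpha})=J(\alpha_0)$ on $\mathcal{E}_n$, and therefore $P\del[0]{J(\tilde{\alpha})=J(\alpha_0)}\geq P(\mathcal{E}_n)\geq 1-\epsilon$.

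The argument is short, so the only thing that needs care — the closest thing to an obstacle — is the bookkeeping of constants and quantifiers: the constant $C$ that the $O_p(\lambda)$ conclusion delivers depends on $\epsilon$, and the very same $C$ must appear in the threshold $H=2C\lambda$ and in the hypothesis $\min_{j\in J(\alpha_0)}|\alpha_{0j}|>3C\lambda$. Once these are aligned, the gap between the floor $3C\lambda$ on the nonzero coefficients and the threshold $2C\lambda$, set against the $C\lambda$ estimation error, is exactly what makes both inclusions go through. I would also point out explicitly that this is where the sharpness of the $\ell_\infty$ bound is essential: replacing it by the $\ell_1$ bound $\enVert[0]{\hat{\alpha}-\alpha_0}_{\ell_1}=O_p(s\lambda)$ of \cite{lee2012lasso} would force the threshold up to order $s\lambda$ and demand $\min_{j\in J(\alpha_0)}|\alpha_{0j}|$ to exceed a multiple of $s\lambda$, a much stronger requirement given that $s$ may be nearly of order $\sqrt{n}$.
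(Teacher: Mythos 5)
Your proposal is correct and follows essentially the same route as the paper's own proof: invoke the sup-norm bounds of Theorems \ref{thm1} and \ref{thm2} to get a high-probability event on which $\enVert[0]{\hat{\alpha}-\alpha_0}_{\ell_\infty}\leq C\lambda$, then handle the zero and non-zero coefficients separately via the (reverse) triangle inequality against the threshold $H=2C\lambda$. Your added care about the quantifier order for $C(\epsilon)$ and the remark contrasting with the $\ell_1$-based threshold are consistent with the paper's discussion and do not change the argument.
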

Theorem \ref{thmthresh} states that consistent model selection is possible with the thresholded Lasso in the nonlinear break point regression model as long as the non-zero coefficients are at least of the order $\sqrt{\frac{\log(m)}{n}}$. This is considerably sharper than thresholding based on $\ell_1$ estimation errors where consistent variable selection would require the non-zero coefficients to be at least of order $s\sqrt{\frac{\log(m)}{n}}$. The idea in the proof of Theorem \ref{thmthresh} is similar to the one for the linear case in \cite{lounici2008sup}.

Note that if one is only interested in finding out whether there is a break or not, i.e. whether $\delta_0$ is non-zero or not, one can simply threshold $\tilde\delta$ only according to the rule in (\ref{thresh}). Defining $J({\delta_0}) = \{ j=1,...,m: \delta_{0j} \neq 0 \}$ and $J(\tilde{\delta})=\cbr[0]{j=1,....,m: \tilde{\delta}_{j}\neq 0}$ we have the following result on consistent break detection.
\begin{thm}\label{thmbreak}
Let the assumptions of Theorems \ref{thm1} and \ref{thm2} be satisfied and assume that $\min_{ j \in J(\delta_0)} |\delta_{0j}| > 3 C \lambda$. Then, for all $\epsilon>0$ there exists a $C$ such that for $H=2C\lambda=2C\sqrt{\frac{\log(m)}{n}}$ one has $P\del[1]{J(\tilde{\delta})=J(\delta_0)}\geq 1-\epsilon$.
\end{thm}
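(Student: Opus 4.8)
The proof mirrors that of Theorem \ref{thmthresh} restricted to the last $m$ coordinates, and is essentially a direct consequence of the sup-norm bound $\|\hat\delta-\delta_0\|_{\ell_\infty}=O_p(\lambda)$ established (via the \emph{a fortiori} statements) in Theorems \ref{thm1} and \ref{thm2}. The plan is as follows. Fix $\epsilon>0$. By the conclusion of Theorems \ref{thm1} and \ref{thm2} — one of which applies according to whether $\delta_0=0$ or $\delta_0\neq0$ — there is a constant $C$ such that the event
\[
\mathcal{E}=\left\{\|\hat\delta-\delta_0\|_{\ell_\infty}\leq C\lambda\right\}
\]
has $P(\mathcal{E})\geq 1-\epsilon$. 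The hypothesis $\min_{j\in J(\delta_0)}|\delta_{0j}|>3C\lambda$ is to be read with this same $C$.

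The core step is then to show $\mathcal{E}\subseteq\{J(\tilde\delta)=J(\delta_0)\}$, which immediately yields $P(J(\tilde\delta)=J(\delta_0))\geq P(\mathcal{E})\geq 1-\epsilon$. Work on $\mathcal{E}$. For $j\in J(\delta_0)$, the reverse triangle inequality gives $|\hat\delta_j|\geq|\delta_{0j}|-|\hat\delta_j-\delta_{0j}|\geq|\delta_{0j}|-C\lambda>3C\lambda-C\lambda=2C\lambda=H$, so the rule in (\ref{thresh}) leaves $\tilde\delta_j=\hat\delta_j\neq0$, i.e. $j\in J(\tilde\delta)$. For $j\notin J(\delta_0)$, i.e. $\delta_{0j}=0$, one has $|\hat\delta_j|=|\hat\delta_j-\delta_{0j}|\leq C\lambda<2C\lambda=H$, so $\tilde\delta_j=0$ and $j\notin J(\tilde\delta)$. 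Combining the two inclusions gives $J(\tilde\delta)=J(\delta_0)$ on $\mathcal{E}$. If $\delta_0=0$, then $J(\delta_0)=\emptyset$, the separation hypothesis is vacuous, and only the second case is needed, again giving $J(\tilde\delta)=\emptyset$ on $\mathcal{E}$.

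There is no substantive obstacle remaining at this stage: all the difficulty has already been absorbed into establishing the $\ell_\infty$ rate in Theorems \ref{thm1}–\ref{thm2}, and it is precisely that rate — rather than the coarser $\ell_1$-driven magnitude of order $s\lambda$ — which makes the threshold $H=2C\lambda$ large enough to kill the estimated zero entries yet small enough to retain the non-zero ones down to magnitude of order $\sqrt{\log(m)/n}$. The only point requiring a line of care is bookkeeping with the constant $C$, which occurs both in the threshold and in the separation hypothesis; this is handled by first extracting $C$ from Theorems \ref{thm1} and \ref{thm2} for the given $\epsilon$ and then reading the hypothesis with that value, exactly as in the proof of Theorem \ref{thmthresh} and as in the linear-model argument of \cite{lounici2008sup}.
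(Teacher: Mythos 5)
Your proof is correct and is exactly the argument the paper intends: the paper's own proof of Theorem \ref{thmbreak} simply states that it ``proceeds exactly as the proof of Theorem \ref{thmthresh},'' and your write-up is that argument restricted to the $\delta$-coordinates, using the \emph{a fortiori} sup-norm bound on $\hat\delta-\delta_0$ and the same two-case thresholding analysis with the constant $C$ fixed in advance from Theorems \ref{thm1} and \ref{thm2}. No gaps; your handling of the degenerate case $\delta_0=0$ is a small but welcome addition of explicitness.
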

Break selection consistency is weaker than model selection consistency as it only requires classifying $\delta_0$ correctly. However, it is still relevant as it answers the question whether a break is present or not. We discuss how to choose the threshold parameter $C$ in practice in Section \ref{sims}.

\section{Simulations}\label{sims}

In this section we report the results of a series of simulation experiments evaluating the finite sample properties of the thresholded scaled Lasso. We shall consider performance along the dimensions: increasing number of irrelevant variables, estimation in the absence of a threshold, increasing number of observations, scale of the parameters, and increasing number of non-zero variables.

The regressors are generated as $X_i\sim\mathcal{N}(0,I)$, the threshold variable $Q_i\sim\mathcal{U}[0,1]$, and the innovations $U_i\sim\mathcal{N}(0,\sigma^2)$ where we set the residual variance $\sigma^2=0.25,\ i=1,...,n$.
When the threshold parameter $\tau_0$ is not explicitly stated it is set to $\tau_0=0.5$; we search for $\tau_0$ over a grid from $0.15$ to $0.85$ by steps of $0.05$. This grid is coarser than the grid used in \cite{lee2012lasso} which, in our experience, has a mild detrimental effect on the precision with which $\tau_0$ is estimated but not on other measures of the quality of the estimator while substantially reducing computation time, thus allowing us to carry out more replications. We select the thresholding parameter $C$ by BIC using a grid from $0.1$ to $5$, so that parameters smaller (in absolute value) than $\widehat C \widehat \lambda$ are set to zero by the thresholded scaled Lasso.

Every model is estimated with an intercept so that we estimate $2m+1$ parameters, plus the threshold parameter $\tau_0$.
All the results reported below are based on 1000 replications.
The simulation are carried with \texttt{R} \citep{R} using the \texttt{glmnet} package of \cite{glmnet}. The results (and those of the empirical application in section \ref{application}) can be replicated using \texttt{knitr} \citep{knitr} and the supplementary material\footnote{Available at \url{https://github.com/lcallot/ttlas}}.

We report the following statistics, averaged across iterations. 
\begin{itemize}
	\item MSE: mean square prediction error.
	\item $| J(\hat \alpha) \cap J(\alpha_0)^c|$: number zero parameters incorrectly retained in the model.
	\item $|J(\alpha_0)\cap J(\hat \alpha)^c|$: number of non-zero parameters excluded.
	\item Perfect Sel.: the share (in \%) of iterations for which we have perfect model selection.
	\item $\enVert{\hat\alpha - \alpha_0}_1$: $\ell_1$ estimation error for the parameters.
	\item $\enVert{\hat\alpha - \alpha_0}_{\infty}$: $\ell_\infty$ estimation error for the parameters.
	\item $|\hat\tau - \tau_0|$: absolute threshold parameter estimation error. 
	\item C: selected (BIC) thresholding parameter.
	\item $\hat{\lambda}$: selected (BIC) penalty parameter.
\end{itemize}

\begin{table}[!htpb]
\centering
\begin{tabular}{ l l H r r r r H r r r r r }
\toprule
 & $\tau_0$ &  & \rotpar{ MSE } & \rotpar{ {\small $ |J(\hat \alpha) \cap J(\alpha_0)^c|$} } & \rotpar{ {\small $|J(\alpha_0) \cap J(\hat\alpha)^c|$} } & \rotpar{ Perfect Sel } & \rotpar{ \# non Zeros } & \rotpar{ $\enVert{\hat\alpha - \alpha_0}_1$ } & \rotpar{ $\enVert{\hat\alpha - \alpha_0}_{\infty}$ } & \rotpar{ $|\hat\tau - \tau_0|$ } & C & $\hat\lambda$\\
\midrule
 \cellcolor{white}  & \cellcolor{white}  &  L & 1.20 & 4.50 & 0.06 & 3 & 14.44 & 3.39 & 0.85 & 0.27 &  -  & 0.06\\
 \rowcolor{black!10} \cellcolor{white}  & \cellcolor{white} \multirow{-2}{*}{0.3} &  T & 1.22 & 0.04 & 0.11 & 86 & 9.93 & 3.29 & 0.85 &  -  & 1.46 &  - \\
 \cellcolor{white}  & \cellcolor{white}  &  L & 1.42 & 5.23 & 0.07 & 1 & 15.16 & 3.68 & 0.94 & 0.22 &  -  & 0.05\\
 \rowcolor{black!10} \cellcolor{white}  & \cellcolor{white} \multirow{-2}{*}{0.4} &  T & 1.45 & 0.04 & 0.16 & 81 & 9.88 & 3.55 & 0.94 &  -  & 1.49 &  - \\
 \cellcolor{white}  & \cellcolor{white}  &  L & 1.55 & 5.72 & 0.07 & 0 & 15.66 & 3.99 & 1.00 & 0.18 &  -  & 0.05\\
 \rowcolor{black!10} \cellcolor{white} \multirow{-6}{*}{$m=50$} & \cellcolor{white} \multirow{-2}{*}{0.5} &  T & 1.58 & 0.06 & 0.14 & 82 & 9.91 & 3.85 & 1.01 &  -  & 1.45 &  - \\
\cmidrule(l){1 - 13}
 \cellcolor{white}  & \cellcolor{white}  &  L & 1.34 & 5.59 & 0.05 & 1 & 15.53 & 3.99 & 0.95 & 0.25 &  -  & 0.07\\
 \rowcolor{black!10} \cellcolor{white}  & \cellcolor{white} \multirow{-2}{*}{0.3} &  T & 1.38 & 0.04 & 0.13 & 85 & 9.91 & 3.86 & 0.95 &  -  & 1.27 &  - \\
 \cellcolor{white}  & \cellcolor{white}  &  L & 1.56 & 6.26 & 0.08 & 0 & 16.18 & 4.29 & 1.03 & 0.22 &  -  & 0.07\\
 \rowcolor{black!10} \cellcolor{white}  & \cellcolor{white} \multirow{-2}{*}{0.4} &  T & 1.60 & 0.05 & 0.16 & 82 & 9.89 & 4.15 & 1.03 &  -  & 1.25 &  - \\
 \cellcolor{white}  & \cellcolor{white}  &  L & 1.77 & 7.27 & 0.12 & 0 & 17.15 & 4.77 & 1.10 & 0.19 &  -  & 0.07\\
 \rowcolor{black!10} \cellcolor{white} \multirow{-6}{*}{$m=100$} & \cellcolor{white} \multirow{-2}{*}{0.5} &  T & 1.83 & 0.07 & 0.21 & 78 & 9.86 & 4.60 & 1.11 &  -  & 1.22 &  - \\
\cmidrule(l){1 - 13}
 \cellcolor{white}  & \cellcolor{white}  &  L & 1.57 & 7.06 & 0.10 & 0 & 16.95 & 4.65 & 1.06 & 0.25 &  -  & 0.09\\
 \rowcolor{black!10} \cellcolor{white}  & \cellcolor{white} \multirow{-2}{*}{0.3} &  T & 1.62 & 0.03 & 0.19 & 82 & 9.84 & 4.49 & 1.06 &  -  & 1.15 &  - \\
 \cellcolor{white}  & \cellcolor{white}  &  L & 1.80 & 8.10 & 0.12 & 0 & 17.97 & 5.04 & 1.14 & 0.22 &  -  & 0.09\\
 \rowcolor{black!10} \cellcolor{white}  & \cellcolor{white} \multirow{-2}{*}{0.4} &  T & 1.87 & 0.03 & 0.22 & 79 & 9.81 & 4.86 & 1.15 &  -  & 1.12 &  - \\
 \cellcolor{white}  & \cellcolor{white}  &  L & 2.22 & 9.20 & 0.26 & 0 & 18.94 & 5.82 & 1.27 & 0.18 &  -  & 0.09\\
 \rowcolor{black!10} \cellcolor{white} \multirow{-6}{*}{$m=200$} & \cellcolor{white} \multirow{-2}{*}{0.5} &  T & 2.30 & 0.06 & 0.40 & 71 & 9.67 & 5.60 & 1.28 &  -  & 1.07 &  - \\
\cmidrule(l){1 - 13}
 \cellcolor{white}  & \cellcolor{white}  &  L & 1.73 & 8.81 & 0.15 & 0 & 18.66 & 5.38 & 1.16 & 0.26 &  -  & 0.10\\
 \rowcolor{black!10} \cellcolor{white}  & \cellcolor{white} \multirow{-2}{*}{0.3} &  T & 1.81 & 0.03 & 0.23 & 81 & 9.81 & 5.18 & 1.17 &  -  & 1.04 &  - \\
 \cellcolor{white}  & \cellcolor{white}  &  L & 2.16 & 9.35 & 0.33 & 0 & 19.02 & 6.17 & 1.30 & 0.22 &  -  & 0.12\\
 \rowcolor{black!10} \cellcolor{white}  & \cellcolor{white} \multirow{-2}{*}{0.4} &  T & 2.26 & 0.04 & 0.47 & 73 & 9.57 & 5.94 & 1.31 &  -  & 0.98 &  - \\
 \cellcolor{white}  & \cellcolor{white}  &  L & 2.84 & 9.81 & 0.66 & 0 & 19.15 & 7.26 & 1.46 & 0.19 &  -  & 0.13\\
 \rowcolor{black!10} \cellcolor{white} \multirow{-6}{*}{$m=400$} & \cellcolor{white} \multirow{-2}{*}{0.5} &  T & 2.96 & 0.03 & 0.91 & 60 & 9.12 & 7.02 & 1.47 &  -  & 0.90 &  - \\
\bottomrule
\end{tabular}\caption{Lasso (white background) and Thresholded Lasso (grey background). Increasing number of zero parameters and 3 locations of $\tau_0$.}
\label{tab:xpzero}
\end{table}

Table \ref{tab:xpzero} contains the results of experiments where we consider 4 different dimensions for the parameter vectors and multiple locations for $\tau_0$. The data is generated as follows:
\begin{itemize}
	\item Sample size: $n=200$, $\beta = [2,2,2,2,2,0,...,0]$, $\delta = [2,-2,2,-2,2,0,...,0]$.
	\item The length $\beta$ and $\delta$ is $m=50,100,200,400$.
\end{itemize}

The most important finding in Table \ref{tab:xpzero} is that across all settings the scaled Lasso almost never detects the true model while its thresholded version does so very often and rather consistent across the settings. As expected, the scaled Lasso does a good job at model screening in the sense that it retains all relevant variables in many instances. However, it often fails to exclude irrelevant variables. This is exactly where the thresholding sets in --  it weeds out the falsely retained variables by the first step scaled Lasso. To illustrate this, consider the setting of $m=400$ and $\tau_0=0.5$. Here the scaled Lasso includes almost ten irrelevant variables on average while its thresholded version includes as few as 0.03 irrelevant variables on average. Note also how the $\ell_\infty$ estimation error is much lower than the $\ell_1$ counterpart confirming our theoretical results from Theorems \ref{thm1} and \ref{thm2}, thus allowing for much sharper thresholding than usual. This important finding is confirmed in all of the other settings below.

Perfect model selection seems to be slightly easier for lower values of the threshold parameter $\tau_0$. On the other hand, $\hat{\tau}$ becomes less precise as $\tau_0$ is lowered. All other measures in general improve slightly when $\tau_0$ is lowered. Increasing the dimension of the model, $m$, worsens most performance measures except for the estimation error of $\hat{\tau}$ which stays constant.  Finally, in larger models more penalization is applied as can be seen from the larger choice of $\lambda$ as $m$ is increased.

Table \ref{tab:xpnj} considers the case where no threshold effect is present, $\delta_0=0$, the exact data generating process is:
\begin{itemize}
	\item Sample size: $n=200$, $\beta = [2,2,2,2,2,0,...,0]$, $\delta = [0,...,0]$.
	\item The length $\beta$ and $\delta$ is $m=50,100,200,400$.
\end{itemize}

\begin{table}[!htpb]
\centering
\begin{tabular}{ l H H r r r r H r r H r r }
\toprule
 &  &  & \rotpar{ MSE } & \rotpar{ {\small $ |J(\hat \alpha) \cap J(\alpha_0)^c|$} } & \rotpar{ {\small $|J(\alpha_0) \cap J(\hat\alpha)^c|$} } & \rotpar{ Perfect Sel } & \rotpar{ \# non Zeros } & \rotpar{ $\enVert{\hat\alpha - \alpha_0}_1$ } & \rotpar{ $\enVert{\hat\alpha - \alpha_0}_{\infty}$ } & \rotpar{ $|\hat\tau - \tau_0|$ } & C & $\hat\lambda$\\
\midrule
 \cellcolor{white}  & \cellcolor{white}  &  L & 0.29 & 1.56 & 0.00 & 23 & 6.56 & 0.60 & 0.16 & 0.26 &  -  & 0.07\\
 \rowcolor{black!10} \cellcolor{white} \multirow{-2}{*}{$m=50$} & \cellcolor{white} \multirow{-2}{*}{$n=200$} &  T & 0.29 & 0.21 & 0.00 & 81 & 5.21 & 0.56 & 0.16 &  -  & 0.73 &  - \\
\cmidrule(l){1 - 13}
 \cellcolor{white}  & \cellcolor{white}  &  L & 0.30 & 1.56 & 0.00 & 23 & 6.57 & 0.65 & 0.17 & 0.27 &  -  & 0.08\\
 \rowcolor{black!10} \cellcolor{white} \multirow{-2}{*}{$m=100$} & \cellcolor{white} \multirow{-2}{*}{$n=200$} &  T & 0.31 & 0.18 & 0.00 & 83 & 5.18 & 0.61 & 0.17 &  -  & 0.61 &  - \\
\cmidrule(l){1 - 13}
 \cellcolor{white}  & \cellcolor{white}  &  L & 0.31 & 1.45 & 0.00 & 27 & 6.45 & 0.70 & 0.18 & 0.27 &  -  & 0.09\\
 \rowcolor{black!10} \cellcolor{white} \multirow{-2}{*}{$m=200$} & \cellcolor{white} \multirow{-2}{*}{$n=200$} &  T & 0.32 & 0.15 & 0.00 & 86 & 5.15 & 0.66 & 0.18 &  -  & 0.53 &  - \\
\cmidrule(l){1 - 13}
 \cellcolor{white}  & \cellcolor{white}  &  L & 0.32 & 1.44 & 0.00 & 27 & 6.43 & 0.74 & 0.19 & 0.27 &  -  & 0.10\\
 \rowcolor{black!10} \cellcolor{white} \multirow{-2}{*}{$m=400$} & \cellcolor{white} \multirow{-2}{*}{$n=200$} &  T & 0.33 & 0.12 & 0.00 & 89 & 5.12 & 0.71 & 0.19 &  -  & 0.46 &  - \\
\bottomrule
\end{tabular}
\caption{Lasso (white background) and Thresholded Lasso (grey background). No threshold effect ($\delta=0$), $n=200$, 4 different length of the parameter vector.}
\label{tab:xpnj}
\end{table}

The main finding of Table \ref{tab:xpnj} is that almost all performance measures improve drastically compared to Table \ref{tab:xpzero}. This is the case in particular for large $m$ as the performance is no longer worsened as $m$ increases. Note, for example, that the MSE and $\ell_1$ estimation error of $\hat{\alpha}$ are almost ten times lower for $m=400$ than they were in Table \ref{tab:xpzero}. Most importantly for us, the perfect models selection percentage is now also stable across $m$.

In order to investigate the asymptotic properties of our procedure, Table \ref{tab:xpsmpl} reveals the effect of increasing the sample size for two values of $\tau_0$. The exact DGP is: 
\begin{itemize}
	\item Sample size: $n=50,100,200,500,1000$.
	\item $\beta = [2,2,2,2,2,0,...,0]$, $\delta = [2,-2,2,-2,2,0,...,0]$.
\end{itemize}

\begin{table}[!htpb]
\centering
\begin{tabular}{ l l H r r r r H r r r r r }
\toprule
 &  &  & \rotpar{ MSE } & \rotpar{ {\small $ |J(\hat \alpha) \cap J(\alpha_0)^c|$} } & \rotpar{ {\small $|J(\alpha_0) \cap J(\hat\alpha)^c|$} } & \rotpar{ Perfect Sel } & \rotpar{ \# non Zeros } & \rotpar{ $\enVert{\hat\alpha - \alpha_0}_1$ } & \rotpar{ $\enVert{\hat\alpha - \alpha_0}_{\infty}$ } & \rotpar{ $|\hat\tau - \tau_0|$ } & C & $\hat\lambda$\\
\midrule
 \cellcolor{white}  & \cellcolor{white}  &  L & 10.04 & 1.83 & 4.92 & 0 & 6.91 & 14.72 & 1.99 & 0.30 &  -  & 0.58\\
 \rowcolor{black!10} \cellcolor{white}  & \cellcolor{white} \multirow{-2}{*}{$n=50$} &  T & 10.64 & 0.29 & 5.51 & 0 & 4.78 & 14.66 & 1.99 &  -  & 0.67 &  - \\
 \cellcolor{white}  & \cellcolor{white}  &  L & 3.34 & 7.22 & 1.09 & 0 & 16.13 & 7.92 & 1.51 & 0.27 &  -  & 0.15\\
 \rowcolor{black!10} \cellcolor{white}  & \cellcolor{white} \multirow{-2}{*}{$n=100$} &  T & 3.53 & 0.12 & 1.38 & 45 & 8.74 & 7.63 & 1.51 &  -  & 1.32 &  - \\
 \cellcolor{white}  & \cellcolor{white}  &  L & 1.46 & 5.56 & 0.08 & 1 & 15.48 & 4.07 & 1.00 & 0.25 &  -  & 0.07\\
 \rowcolor{black!10} \cellcolor{white}  & \cellcolor{white} \multirow{-2}{*}{$n=200$} &  T & 1.50 & 0.04 & 0.16 & 82 & 9.87 & 3.95 & 1.00 &  -  & 1.25 &  - \\
 \cellcolor{white}  & \cellcolor{white}  &  L & 0.76 & 3.31 & 0.01 & 6 & 13.30 & 2.27 & 0.64 & 0.17 &  -  & 0.04\\
 \rowcolor{black!10} \cellcolor{white}  & \cellcolor{white} \multirow{-2}{*}{$n=500$} &  T & 0.76 & 0.01 & 0.02 & 97 & 9.99 & 2.23 & 0.64 &  -  & 0.95 &  - \\
 \cellcolor{white}  & \cellcolor{white}  &  L & 0.50 & 2.62 & 0.00 & 10 & 12.62 & 1.51 & 0.45 & 0.06 &  -  & 0.03\\
 \rowcolor{black!10} \cellcolor{white} \multirow{-10}{*}{$\tau_0= 0.3$} & \cellcolor{white} \multirow{-2}{*}{$n=1000$} &  T & 0.50 & 0.00 & 0.01 & 98 & 10.00 & 1.49 & 0.45 &  -  & 0.81 &  - \\
\cmidrule(l){1 - 13}
 \cellcolor{white}  & \cellcolor{white}  &  L & 8.98 & 1.81 & 4.84 & 0 & 6.98 & 14.56 & 2.00 & 0.21 &  -  & 0.48\\
 \rowcolor{black!10} \cellcolor{white}  & \cellcolor{white} \multirow{-2}{*}{$n=50$} &  T & 9.52 & 0.24 & 5.43 & 0 & 4.81 & 14.48 & 2.00 &  -  & 0.62 &  - \\
 \cellcolor{white}  & \cellcolor{white}  &  L & 4.73 & 5.41 & 2.15 & 0 & 13.26 & 10.05 & 1.75 & 0.20 &  -  & 0.21\\
 \rowcolor{black!10} \cellcolor{white}  & \cellcolor{white} \multirow{-2}{*}{$n=100$} &  T & 4.94 & 0.12 & 2.62 & 23 & 7.50 & 9.84 & 1.75 &  -  & 1.00 &  - \\
 \cellcolor{white}  & \cellcolor{white}  &  L & 1.83 & 7.41 & 0.12 & 0 & 17.30 & 4.83 & 1.14 & 0.18 &  -  & 0.07\\
 \rowcolor{black!10} \cellcolor{white}  & \cellcolor{white} \multirow{-2}{*}{$n=200$} &  T & 1.89 & 0.06 & 0.21 & 78 & 9.85 & 4.66 & 1.14 &  -  & 1.22 &  - \\
 \cellcolor{white}  & \cellcolor{white}  &  L & 0.86 & 4.32 & 0.01 & 2 & 14.31 & 2.53 & 0.69 & 0.18 &  -  & 0.04\\
 \rowcolor{black!10} \cellcolor{white}  & \cellcolor{white} \multirow{-2}{*}{$n=500$} &  T & 0.87 & 0.01 & 0.04 & 96 & 9.97 & 2.48 & 0.69 &  -  & 0.96 &  - \\
 \cellcolor{white}  & \cellcolor{white}  &  L & 0.55 & 3.27 & 0.00 & 8 & 13.27 & 1.70 & 0.49 & 0.08 &  -  & 0.03\\
 \rowcolor{black!10} \cellcolor{white} \multirow{-10}{*}{$\tau_0= 0.5$} & \cellcolor{white} \multirow{-2}{*}{$n=1000$} &  T & 0.55 & 0.01 & 0.01 & 98 & 10.00 & 1.67 & 0.49 &  -  & 0.80 &  - \\
\bottomrule
\end{tabular}
\caption{Lasso (white background) and Thresholded Lasso (grey background). Increasing sample size with $m=100$ and 2 locations of $\tau_0$.}
\label{tab:xpsmpl}
\end{table}

As expected, the probability of correct model selection tends to one for the thresholded scaled Lasso. For the plain scaled Lasso, on the other hand, this probability reaches at most 11\%. As seen already in Table \ref{tab:xpzero}, the problem that the scaled Lasso suffers from is false positives -- it fails to exclude irrelevant variables even as the sample size increases. Finally, and as expected, the penalty applied ($\lambda$) decreases as $n$ increases.

Table \ref{tab:xppsize} considers different values of the non-zero coefficients to investigate the effect of the scale of these coefficients. The data is generated as:
\begin{itemize}
	\item Sample size: $n=100,200$.
	\item $\beta = a[1,1,1,1,1,0,...,0]$, $\delta = a[1,-1,1,-1,1,0,...,0]$.
	\item $a=0.3,0.5,1,2$ is the scale of the non zero parameters. 
\end{itemize}

\begin{table}[!htpb]
\centering
\begin{tabular}{ l l H r r r r H r r r r r }
\toprule
 &  &  & \rotpar{ MSE } & \rotpar{ {\small $ |J(\hat \alpha) \cap J(\alpha_0)^c|$} } & \rotpar{ {\small $|J(\alpha_0) \cap J(\hat\alpha)^c|$} } & \rotpar{ Perfect Sel } & \rotpar{ \# non Zeros } & \rotpar{ $\enVert{\hat\alpha - \alpha_0}_1$ } & \rotpar{ $\enVert{\hat\alpha - \alpha_0}_{\infty}$ } & \rotpar{ $|\hat\tau - \tau_0|$ } & C & $\hat\lambda$\\
\midrule
 \cellcolor{white}  & \cellcolor{white}  &  L & 0.50 & 0.41 & 5.50 & 0 & 4.91 & 2.27 & 0.30 & 0.28 &  -  & 0.15\\
 \rowcolor{black!10} \cellcolor{white}  & \cellcolor{white} \multirow{-2}{*}{$n=100$} &  T & 0.52 & 0.02 & 6.22 & 0 & 3.80 & 2.30 & 0.30 &  -  & 0.46 &  - \\
 \cellcolor{white}  & \cellcolor{white}  &  L & 0.38 & 0.29 & 4.00 & 0 & 6.29 & 1.89 & 0.30 & 0.32 &  -  & 0.10\\
 \rowcolor{black!10} \cellcolor{white}  & \cellcolor{white} \multirow{-2}{*}{$n=200$} &  T & 0.39 & 0.01 & 4.56 & 0 & 5.45 & 1.91 & 0.30 &  -  & 0.44 &  - \\
 \cellcolor{white}  & \cellcolor{white}  &  L & 0.31 & 0.60 & 1.74 & 1 & 8.86 & 1.38 & 0.30 & 0.10 &  -  & 0.04\\
 \rowcolor{black!10} \cellcolor{white} \multirow{-6}{*}{$a=0.3$} & \cellcolor{white} \multirow{-2}{*}{$n=1000$} &  T & 0.31 & 0.00 & 2.21 & 5 & 7.80 & 1.38 & 0.30 &  -  & 0.51 &  - \\
\cmidrule(l){1 - 13}
 \cellcolor{white}  & \cellcolor{white}  &  L & 0.75 & 0.57 & 4.49 & 0 & 6.09 & 3.43 & 0.50 & 0.25 &  -  & 0.15\\
 \rowcolor{black!10} \cellcolor{white}  & \cellcolor{white} \multirow{-2}{*}{$n=100$} &  T & 0.78 & 0.03 & 5.15 & 0 & 4.88 & 3.45 & 0.50 &  -  & 0.47 &  - \\
 \cellcolor{white}  & \cellcolor{white}  &  L & 0.57 & 0.50 & 3.21 & 0 & 7.28 & 2.92 & 0.50 & 0.27 &  -  & 0.10\\
 \rowcolor{black!10} \cellcolor{white}  & \cellcolor{white} \multirow{-2}{*}{$n=200$} &  T & 0.58 & 0.01 & 3.95 & 0 & 6.06 & 2.93 & 0.50 &  -  & 0.48 &  - \\
 \cellcolor{white}  & \cellcolor{white}  &  L & 0.31 & 2.75 & 0.04 & 9 & 12.71 & 1.37 & 0.32 & 0.10 &  -  & 0.03\\
 \rowcolor{black!10} \cellcolor{white} \multirow{-6}{*}{$a=0.5$} & \cellcolor{white} \multirow{-2}{*}{$n=1000$} &  T & 0.31 & 0.00 & 0.06 & 94 & 9.94 & 1.35 & 0.32 &  -  & 0.75 &  - \\
\cmidrule(l){1 - 13}
 \cellcolor{white}  & \cellcolor{white}  &  L & 1.87 & 1.12 & 3.52 & 0 & 7.60 & 6.31 & 1.00 & 0.22 &  -  & 0.18\\
 \rowcolor{black!10} \cellcolor{white}  & \cellcolor{white} \multirow{-2}{*}{$n=100$} &  T & 1.94 & 0.05 & 4.21 & 0 & 5.84 & 6.31 & 1.00 &  -  & 0.56 &  - \\
 \cellcolor{white}  & \cellcolor{white}  &  L & 1.09 & 3.95 & 1.16 & 0 & 12.79 & 4.46 & 0.86 & 0.21 &  -  & 0.09\\
 \rowcolor{black!10} \cellcolor{white}  & \cellcolor{white} \multirow{-2}{*}{$n=200$} &  T & 1.12 & 0.04 & 1.54 & 39 & 8.50 & 4.39 & 0.86 &  -  & 0.88 &  - \\
 \cellcolor{white}  & \cellcolor{white}  &  L & 0.34 & 2.98 & 0.00 & 9 & 12.98 & 1.43 & 0.35 & 0.08 &  -  & 0.03\\
 \rowcolor{black!10} \cellcolor{white} \multirow{-6}{*}{$a=1$} & \cellcolor{white} \multirow{-2}{*}{$n=1000$} &  T & 0.34 & 0.00 & 0.01 & 99 & 10.00 & 1.41 & 0.35 &  -  & 0.83 &  - \\
\cmidrule(l){1 - 13}
 \cellcolor{white}  & \cellcolor{white}  &  L & 4.68 & 5.32 & 2.12 & 0 & 13.20 & 10.01 & 1.76 & 0.20 &  -  & 0.21\\
 \rowcolor{black!10} \cellcolor{white}  & \cellcolor{white} \multirow{-2}{*}{$n=100$} &  T & 4.89 & 0.10 & 2.61 & 21 & 7.49 & 9.80 & 1.76 &  -  & 1.02 &  - \\
 \cellcolor{white}  & \cellcolor{white}  &  L & 1.81 & 7.44 & 0.11 & 0 & 17.33 & 4.74 & 1.12 & 0.18 &  -  & 0.07\\
 \rowcolor{black!10} \cellcolor{white}  & \cellcolor{white} \multirow{-2}{*}{$n=200$} &  T & 1.87 & 0.05 & 0.21 & 78 & 9.84 & 4.57 & 1.12 &  -  & 1.23 &  - \\
 \cellcolor{white}  & \cellcolor{white}  &  L & 0.56 & 3.18 & 0.00 & 7 & 13.18 & 1.70 & 0.49 & 0.07 &  -  & 0.03\\
 \rowcolor{black!10} \cellcolor{white} \multirow{-6}{*}{$a=2$} & \cellcolor{white} \multirow{-2}{*}{$n=1000$} &  T & 0.56 & 0.01 & 0.01 & 98 & 10.00 & 1.68 & 0.49 &  -  & 0.79 &  - \\
\bottomrule
\end{tabular}
\caption{ Lasso (white background) and Thresholded Lasso (grey background). Increasing parameter scale, 3 sample sizes, $\tau_0 = 0.5$.}
\label{tab:xppsize}
\end{table}

When these are as small as $0.3$ perfect model selection does not seem possible unless when $n=1000$. On the other hand, the number of relevant variables excluded clearly decreases as $n$ is increased. In general, no matter what the value of the non-zero coefficients are, all performance measures improve as $n$ is increased, thus confirming the findings in Table \ref{tab:xpsmpl}. While variable selection is easier when the non-zero coefficients are well-separated from the zero ones, the MSE and estimation error of $\hat{\alpha}$ actually improve as the non-zero coefficients become smaller. The reason for this is that falsely classifying a non-zero coefficient as zero is less costly in terms of estimation error when this coefficient is already close to zero than when it is far from zero. On the other hand, $\hat{\tau}$ is estimated slightly more precisely as the non-zero coefficients become more separated from the zero ones.

\begin{table}[!htpb]
\centering
\begin{tabular}{ l l H r r r r H r r r r r }
\toprule
 &  &  & \rotpar{ MSE } & \rotpar{ {\small $ |J(\hat \alpha) \cap J(\alpha_0)^c|$} } & \rotpar{ {\small $|J(\alpha_0) \cap J(\hat\alpha)^c|$} } & \rotpar{ Perfect Sel } & \rotpar{ \# non Zeros } & \rotpar{ $\enVert{\hat\alpha - \alpha_0}_1$ } & \rotpar{ $\enVert{\hat\alpha - \alpha_0}_{\infty}$ } & \rotpar{ $|\hat\tau - \tau_0|$ } & C & $\hat\lambda$\\
\midrule
 \cellcolor{white}  & \cellcolor{white}  &  L & 0.34 & 0.10 & 0.00 & 90 & 2.10 & 0.36 & 0.25 & 0.25 &  -  & 0.09\\
 \rowcolor{black!10} \cellcolor{white}  & \cellcolor{white} \multirow{-2}{*}{$\tau_0=0.3$} &  T & 0.34 & 0.01 & 0.00 & 99 & 2.00 & 0.35 & 0.25 &  -  & 0.23 &  - \\
 \cellcolor{white}  & \cellcolor{white}  &  L & 0.35 & 0.09 & 0.00 & 91 & 2.09 & 0.36 & 0.26 & 0.22 &  -  & 0.09\\
 \rowcolor{black!10} \cellcolor{white}  & \cellcolor{white} \multirow{-2}{*}{$\tau_0=0.4$} &  T & 0.35 & 0.01 & 0.00 & 99 & 2.00 & 0.36 & 0.26 &  -  & 0.20 &  - \\
 \cellcolor{white}  & \cellcolor{white}  &  L & 0.36 & 0.10 & 0.00 & 90 & 2.10 & 0.37 & 0.27 & 0.23 &  -  & 0.09\\
 \rowcolor{black!10} \cellcolor{white} \multirow{-6}{*}{$m_1=1$} & \cellcolor{white} \multirow{-2}{*}{$\tau_0=0.5$} &  T & 0.36 & 0.01 & 0.00 & 99 & 2.01 & 0.37 & 0.27 &  -  & 0.19 &  - \\
\cmidrule(l){1 - 13}
 \cellcolor{white}  & \cellcolor{white}  &  L & 1.84 & 1.11 & 0.19 & 31 & 10.93 & 2.59 & 0.90 & 0.19 &  -  & 0.08\\
 \rowcolor{black!10} \cellcolor{white}  & \cellcolor{white} \multirow{-2}{*}{$\tau_0=0.3$} &  T & 1.86 & 0.01 & 0.28 & 78 & 9.73 & 2.57 & 0.91 &  -  & 0.68 &  - \\
 \cellcolor{white}  & \cellcolor{white}  &  L & 2.03 & 1.11 & 0.20 & 32 & 10.91 & 2.67 & 0.93 & 0.18 &  -  & 0.08\\
 \rowcolor{black!10} \cellcolor{white}  & \cellcolor{white} \multirow{-2}{*}{$\tau_0=0.4$} &  T & 2.05 & 0.02 & 0.30 & 76 & 9.72 & 2.64 & 0.93 &  -  & 0.63 &  - \\
 \cellcolor{white}  & \cellcolor{white}  &  L & 2.05 & 1.01 & 0.16 & 35 & 10.85 & 2.57 & 0.91 & 0.17 &  -  & 0.08\\
 \rowcolor{black!10} \cellcolor{white} \multirow{-6}{*}{$m_1=5$} & \cellcolor{white} \multirow{-2}{*}{$\tau_0=0.5$} &  T & 2.06 & 0.02 & 0.27 & 79 & 9.74 & 2.55 & 0.91 &  -  & 0.61 &  - \\
\cmidrule(l){1 - 13}
 \cellcolor{white}  & \cellcolor{white}  &  L & 5.08 & 2.85 & 0.81 & 5 & 22.04 & 6.54 & 1.36 & 0.19 &  -  & 0.08\\
 \rowcolor{black!10} \cellcolor{white}  & \cellcolor{white} \multirow{-2}{*}{$\tau_0=0.3$} &  T & 5.12 & 0.06 & 1.06 & 51 & 19.01 & 6.48 & 1.36 &  -  & 1.09 &  - \\
 \cellcolor{white}  & \cellcolor{white}  &  L & 4.84 & 2.68 & 0.66 & 7 & 22.01 & 6.17 & 1.28 & 0.18 &  -  & 0.08\\
 \rowcolor{black!10} \cellcolor{white}  & \cellcolor{white} \multirow{-2}{*}{$\tau_0=0.4$} &  T & 4.88 & 0.05 & 0.89 & 57 & 19.16 & 6.11 & 1.28 &  -  & 1.01 &  - \\
 \cellcolor{white}  & \cellcolor{white}  &  L & 5.05 & 2.56 & 0.65 & 7 & 21.91 & 6.10 & 1.25 & 0.18 &  -  & 0.08\\
 \rowcolor{black!10} \cellcolor{white} \multirow{-6}{*}{$m_1=10$} & \cellcolor{white} \multirow{-2}{*}{$\tau_0=0.5$} &  T & 5.09 & 0.04 & 0.90 & 59 & 19.14 & 6.05 & 1.25 &  -  & 0.95 &  - \\
\cmidrule(l){1 - 13}
 \cellcolor{white}  & \cellcolor{white}  &  L & 19.93 & 9.92 & 4.35 & 0 & 55.58 & 23.72 & 1.87 & 0.20 &  -  & 0.07\\
 \rowcolor{black!10} \cellcolor{white}  & \cellcolor{white} \multirow{-2}{*}{$\tau_0=0.3$} &  T & 20.27 & 0.31 & 5.56 & 10 & 44.75 & 23.45 & 1.88 &  -  & 2.45 &  - \\
 \cellcolor{white}  & \cellcolor{white}  &  L & 19.13 & 10.32 & 3.56 & 0 & 56.76 & 22.76 & 1.88 & 0.20 &  -  & 0.07\\
 \rowcolor{black!10} \cellcolor{white}  & \cellcolor{white} \multirow{-2}{*}{$\tau_0=0.4$} &  T & 19.48 & 0.30 & 4.63 & 11 & 45.67 & 22.46 & 1.88 &  -  & 2.31 &  - \\
 \cellcolor{white}  & \cellcolor{white}  &  L & 18.32 & 9.90 & 3.05 & 0 & 56.85 & 21.53 & 1.75 & 0.23 &  -  & 0.07\\
 \rowcolor{black!10} \cellcolor{white} \multirow{-6}{*}{$m_1=25$} & \cellcolor{white} \multirow{-2}{*}{$\tau_0=0.5$} &  T & 18.62 & 0.30 & 3.94 & 19 & 46.37 & 21.25 & 1.75 &  -  & 2.09 &  - \\
\bottomrule
\end{tabular}
\caption{ Lasso (white background) and Thresholded Lasso (grey background). Increasing number of non zero parameters ($m_1$), fixed number of zeros ($m_0=100$), and 3 locations of $\tau_0$.}
\label{tab:xpones}
\end{table}

Finally, Table \ref{tab:xpones} investigates the effect of reducing the sparsity of the model, i.e. of increasing the number of non-zero coefficients.
\begin{itemize}
	\item Sample size: $n=200$.
	\item $\beta = [2,...,2,0,...,0]$, $\delta = [2,...,2,0,...,0]$.
	\item $\beta$ and $\delta$ contain both $m_0=100$ parameters equal to zero.
	\item $\beta$ and $\delta$ contain both $m_1=1,5,10,50$ parameters equal to 2.
	\item The length $\beta$ and $\delta$ is $m=m_0+m_1$.
\end{itemize}

Irrespective of the value of $\tau_0$, perfect model selection becomes harder as the number of relevant variables increases. As our theory is based on the assumption of sparsity, this is not a surprising finding. The MSE and estimation error of $\hat{\alpha}$ also increase by a lot while the estimation error of $\hat{\tau}$ is virtually unaffected by the number of relevant variables. Notice that the threshold parameter, $C$, increases drastically as the number of non-zero coefficients increases. The explanation for this is that thresholding seeks to avoid excluding one of the many relevant variables by setting the threshold higher as there are now more relevant variables at risk of being exluded.

\section{Application}\label{application}

This application aims at investigating the presence of a threshold in the effect of debt on future GDP growth. The academic discussion regarding the impact of debt on growth, and the existence of a threshold above which debt becomes severely detrimental to future growth, has been reignited by \cite{RR} who provided evidence for the existence of such a threshold. The evidences presented by \cite{RR} have been challenged by \cite{herndon2014does}, but others have put forth supportive evidences for this thesis, see among others \cite{BIS,fritzi2010,baum2013debt}. 
\subsection{Data}
We use the data made available by \cite{BIS}\footnote{The original data is available at \url{http://www.bis.org/publ/work352.htm}, and can also be found in the replication material for this section.} which originates mainly from the IMF and OECD data bases. The data contains four measures of debt-to-GDP ratio for:
\begin{enumerate}
  \item Government debt,
  \item Corporate debt,
  \item Private debt (corporate + household),
  \item Total (non financial institutions) debt (private + government).
\end{enumerate}
Notice that private and total debt are aggregate measures of debt.

The data of \cite{BIS} also contains a measure of household debt that we drop as the series is incomplete.
A set of control variables, composed of standard macroeconomic indicators, is also included in the data.
\begin{enumerate}
  \item GDP: The logarithm of the \textit{per capita} GDP.
  \item Savings: Gross savings to GDP ratio.
  \item $\Delta$Pop: Population growth.
  \item School: Years spent in secondary education.
  \item Open: Openness to trade, exports plus imports over GDP.
  \item $\Delta$CPI: Inflation.
  \item Dep: Population dependency ratio.
  \item LL: Ratio of liquid liabilities to GDP.
  \item Crisis: An indicator for banking crisis in the subsequent 5 years. This is taken from \cite{RR}.
\end{enumerate}

The data is observed for 18 countries\footnote{US, Japan, Germany, the United Kingdom, France, Italy, Canada, Australia, Austria, Belgium, Denmark, Finland, Greece, the Netherlands, Norway, Portugal, Spain, and Sweden.} from 1980 to 2009 at an annual frequency. We lose one observation at the start of the sample due to first differencing and five at the end of the sample due to computing the 5 years ahead average growth rate, so that the full sample is 1981-2004. The details on the construction of each variables can be found in \cite{BIS}.

\subsection{Results}

In order to evaluate the impact of debt on growth, as well as the potential presence of a threshold in this effect, we estimate a set of growth regressions. As in \cite{BIS} our left hand side variable is the 5 years forward average rate of growth of per capita GDP. Even though our estimator is not a panel estimator we choose to pool the data so as to make our results comparable with those of \cite{BIS} and benefit from a larger sample. 

\begin{table}
  \begin{tabular}{ l l r r r r r r r r }
    \toprule
    & Threshold:& \multicolumn{2}{c}{ Government }& \multicolumn{2}{c}{ Government }& \multicolumn{2}{c}{ Government }& \multicolumn{2}{c}{ Government }\\
    & & L & T&L & T&L & T&L & T\\
    \midrule
    \multirow{ 11 }{*}{$\hat{\beta}$} & intercept & 42.43 & 42.43 & 79.611 & 79.611 & 86.416 & 86.416 & 136.988 & 136.988\\
    & GDP & -3.643 & -3.643 & -7.419 & -7.419 & -7.495 & -7.495 & -11.621 & -11.621\\
    & Savings & -0.035 & -0.035 & 0.033 & 0.033 & 0.02 & 0.02 &  & \\
    & $\Delta$Pop & -1.692 & -1.692 & -1.493 & -1.493 & -0.879 & -0.879 & -0.813 & -0.813\\
    & School & 0.426 & 0.426 & 0.507 & 0.507 & 0.095 & 0.095 & -0.082 & -0.082\\
    & Open & 0.003 &  & 0.026 &  & 0.024 & 0.024 & 0.037 & 0.037\\
    & $\Delta$CPI & -0.061 & -0.061 & -0.056 & -0.056 & -0.157 & -0.157 & -0.252 & -0.252\\
    & Dep & -0.091 & -0.091 & -0.104 & -0.104 & -0.132 & -0.132 & -0.22 & -0.22\\
    & LL & -0.433 & -0.433 & 0.33 & 0.33 & 0.574 & 0.574 & 0.631 & 0.631\\
    & Crisis & -1.277 & -1.277 & -1.58 & -1.58 & -0.949 & -0.949 & -1.396 & -1.396\\
    & Government & -0.713 & -0.713 &  &  &  &  & -0.518 & -0.518\\

    \cmidrule(lr){1 - 10}

    \multirow{ 11 }{*}{$\hat{\delta}$} & intercept & -12.167 & -12.167 & -1.504 & -1.504 &  &  &  & \\
    & GDP &  &  &  &  &  &  &  & \\
    & Savings & 0.087 & 0.087 & -0.037 &  & -0.052 & -0.052 & 0.008 & \\
    & $\Delta$Pop & 1.563 & 1.563 & 0.42 & 0.42 & 0.222 & 0.222 & 0.61 & 0.61\\
    & School & -0.077 & -0.077 &  &  & 0.203 & 0.203 & 0.098 & 0.098\\
    & Open & -0.006 &  & 0.007 &  & 0.012 &  &  & \\
    & $\Delta$CPI &  &  &  &  &  &  &  & \\
    & Dep & 0.181 & 0.181 &  &  & -0.035 & -0.035 &  & \\
    & LL & 0.827 & 0.827 & 0.909 & 0.909 &  &  &  & \\
    & Crisis & -0.459 & -0.459 & -0.294 & -0.294 & -1.338 & -1.338 &  & \\
    & Government & 1.762 & 1.762 & 1.471 & 1.471 &  &  & -3.23 & -3.23\\
    \midrule
    & $\widehat{\tau}$ & 0.82&0.82&0.68&0.68&0.59&0.59&0.65&0.65\\
    & $\widehat{\lambda}$ & 0.007&0.007&0.015&0.015&0.007&0.007&0.008&0.008\\
    & $\widehat{C}$ &  - &  0.1& - &  0.3& - &  0.1& - &  0.1\\
    & Sample & \multicolumn{2}{c}{1981 - 2004} & \multicolumn{2}{c}{1981 - 2004} & \multicolumn{2}{c}{1990 - 2004} & \multicolumn{2}{c}{No overlap}  \\
    & FE & \multicolumn{2}{c}{$\times$} & \multicolumn{2}{c}{\checkmark} & \multicolumn{2}{c}{\checkmark} & \multicolumn{2}{c}{\checkmark}  \\
    \bottomrule
  \end{tabular}\caption{4 specifications with government debt included as threshold variable and regressor. Estimated parameters for the Lasso (L) and Thresholded Lasso (T). Empty cells are parameters set to zero, dashes indicate parameters not included in the model.}\label{tab:panelGvt}\end{table}

  We report a first set of results focusing on the impact of government debt on future GDP growth in Table \ref{tab:panelGvt}. We consider 3 different samples: 1981 to 2004 (full sample, 414 observations), 1990 to 2004 (252 observations), and a sample with no overlapping data (5 years\footnote{1984,1989,1994,1999,2004.}, 90 observations). For the full sample we report results for models estimated with and without country specific dummies (noted FE in the tables). We do not report the estimated parameters associated with the country specific dummies.

  We estimate the models including every control variable and a single debt measure, that is, 23 parameters to estimate (11 parameters in $\beta$,11 parameters in $\delta$, and the threshold parameter $\tau$) including the intercept and the thresholded intercept plus, in some instances, 17 country specific dummies. The country specific dummies are not penalized. The grid of threshold parameters goes from the $15^{th}$ to the $85^{th}$ centiles of the threshold variable by steps of 5 centiles. We select the thresholding parameter $C$ by BIC using a grid from $0.1$ to $5$, so that parameters smaller (in absolute value) than $\widehat C \widehat \lambda$ are set to zero by the thresholded scaled Lasso.

  Table \ref{tab:panelGvt} reports the estimated parameters for the 4 specifications of the model, all including government debt.  The $L$ and $T$ in the header of the table indicates a scaled Lasso estimate ($\widehat \beta$, $\widehat \delta$) or thresholded scaled Lasso estimate ($\widetilde \beta$, $\widetilde \delta$). The upper panel of each table reports $\widehat \beta$ and $\widetilde \beta$, the middle panel $\widehat \delta$ and $\widetilde \delta$, and the lower panel give the values of $\widehat \tau$, $\widehat \lambda$, and $\widehat C$. Recall that the effect of the regressors when the threshold variable is below its threshold is given by $\widehat \beta + \widehat \delta$ ($\widetilde \beta + \widetilde \delta$) while the effect when the threshold variable is above its threshold is given by $\widehat \beta$ ($\widetilde \beta$) for the scaled Lasso (thresholded scaled Lasso).

  \begin{table}
  \centering
    \begin{tabular}{ l l r r r r r r }
      \toprule
      & Threshold:& \multicolumn{2}{c}{ Corporate }& \multicolumn{2}{c}{ Private }& \multicolumn{2}{c}{ Total }\\
      & & L & T&L & T&L & T\\
      \midrule
      \multirow{ 12 }{*}{$\hat{\beta}$} & intercept & 140.097 & 140.097 & 126.236 & 126.236 & 134.725 & 134.725\\
      & GDP & -11.642 & -11.642 & -10.616 & -10.616 & -11.396 & -11.396\\
      & Savings & -0.026 & -0.026 & -0.031 & -0.031 & -0.011 & -0.011\\
      & $\Delta$Pop & -1.063 & -1.063 &  &  & -0.995 & -0.995\\
      & School & -0.172 & -0.172 &  &  & -0.132 & -0.132\\
      & Open & 0.053 & 0.053 & 0.041 & 0.041 & 0.047 & 0.047\\
      & $\Delta$CPI & -0.204 & -0.204 & -0.19 & -0.19 & -0.166 & -0.166\\
      & Dep & -0.242 & -0.242 & -0.191 & -0.191 & -0.235 & -0.235\\
      & LL & 0.332 & 0.332 & 0.316 & 0.316 & 0.376 & 0.376\\
      & Crisis & -0.96 & -0.96 & -0.319 & -0.319 & -0.943 & -0.943\\
      &Corporate&0.491&0.491&-&-&-&-\\
      &Private&-&-&-0.968&-0.968&-&-\\
      &Total&-&-&-&-&0.284&0.284\\

      \cmidrule(lr){1 - 8}

      \multirow{ 12 }{*}{$\hat{\delta}$} & intercept & 8.261 & 8.261 & 2.301 & 2.301 &  & \\
      & GDP &  &  &  &  &  & \\
      & Savings & -0.243 & -0.243 & 0.022 & 0.022 &  & \\
      & $\Delta$Pop & -2.154 & -2.154 & -1.1 & -1.1 & 2.387 & 2.387\\
      & School & -0.29 & -0.29 & -0.33 & -0.33 & 0.387 & 0.387\\
      & Open &  &  & -0.007 &  & 0.063 & 0.063\\
      & $\Delta$CPI & -0.032 & -0.032 & -0.082 & -0.082 & 0.777 & 0.777\\
      & Dep &  &  &  &  & -0.192 & -0.192\\
      & LL & 1.175 & 1.175 & 0.365 & 0.365 &  & \\
      & Crisis & -2.389 & -2.389 & -1.167 & -1.167 & -31.521 & -31.521\\
      & Corporate&&&-&-&-&-\\
      & Private&-&-&0.563&0.563&-&-\\
      & Total&-&-&-&-&&\\
      \midrule
      & $\widehat{\tau}$ & 0.69&0.69&1.62&1.62&2&2\\
      & $\widehat{\lambda}$ & 0.001&0.001&0.005&0.005&0.002&0.002\\
      & $\widehat{C}$ &  - &  0.1& - &  0.1& - &  0.1\\
      & Sample & \multicolumn{2}{c}{1981 - 2004} & \multicolumn{2}{c}{1981 - 2004} & \multicolumn{2}{c}{1981 - 2004}  \\
      & FE & \multicolumn{2}{c}{\checkmark} & \multicolumn{2}{c}{\checkmark} & \multicolumn{2}{c}{\checkmark}  \\
      \bottomrule
    \end{tabular}\caption{Growth regressions with corporate, private, or total debt (see header) included both as threshold variable and as regressor. Estimated parameters, pooled data, Lasso (L) and Thresholded Lasso (T). Empty cells are parameters set to zero, dashes indicate parameters not included in the model.}\label{tab:panelDbt}\end{table}

    A large fraction of $\widehat \beta$ is non zero, the Lasso drops a single variable twice, while $\widehat \delta$ is more sparse, the Lasso drops between 2 and 7 variables. The thresholding parameter $\widehat C$ is always chosen among the lowest values in the search grid, this nonetheless results in between 1 and 3 extra parameters being discarded compared to the scaled Lasso.
    A threshold ($\widehat \tau$) for the effect of government debt on growth is found at between 60\% and 80\% of GDP, consistent with the findings of \cite{BIS,RR,fritzi2010,baum2013debt}.

    The level of GDP is found to have a negative effect on GDP per capita growth as predicted by the income convergence hypothesis, as do inflation, the dependency ratio, population growth, and crises.
    Considering the effect of both $\widehat \beta$ and $\widehat \delta$, our model indicates in most instances that government debt has a positive effect below the threshold and a negative effect, or no effect at all, above the debt threshold. \textit{Ceteris paribus} a 10 percentage point increase in the government debt to GDP ratio, when it is above the threshold, is found to result in a decrease of the average 5 year growth rate between 0.07\% and zero. Looking at this effect of high debt on future growth in isolation is overly restrictive though since there are large changes in the other parameters of the model when the debt threshold is crossed. This is in particular the case for financial variables. Interestingly, crises are found to have a more detrimental effect on growth for countries with a government debt ratio below the threshold and while liquid liabilities (LL) are beneficial to the future growth of a country with low debt this does not appear to be the case when debt is high.

    Table \ref{tab:panelDbt} reports estimates for 3 other measures of debt in a model with country dummies and using the full sample, the same model used in the first two columns of Table \ref{tab:panelGvt}. The sparsity pattern in Table \ref{tab:panelDbt} is comparable to that of Table \ref{tab:panelGvt} and some similarities are found between the estimated values. Again, the level of \textit{per capita} GDP is found to have a negative impact on future growth, as are the dependency ratio, inflation, population growth, and financial crisis.

A threshold is always found and identified, 69\% for corporate debt, 162\% for private debt, and 200\% for the total debt. The large value of the estimated thresholds for private and total debt can be explained by the fact that these are aggregate measures of debt and hence of a substantially larger magnitude than either corporate of government debts. The effect of corporate and total debt is found to be positive and not directly affected by the threshold whereas the effect of private debt is negative, and more so when private debt is high. As previously, financial crises are found to have a stronger negative impact on countries with low debt, though crises are detrimental to growth irrespective of the level of debt.

\section{Conclusion}
In this paper we considered high-dimensional threshold regressions and provided sup-norm oracle inequalities for the estimation error of the scaled Lasso of \cite{lee2012lasso}. These results are non-trivial as most research has focused on either $\ell_1$ or $\ell_2$ oracle inequalities. The sup-norm bounds are shown to be crucial for exact variable selection by means of thresholding. To be precise, we can distinguish at a much finer scale between zero and non-zero coefficients than would have been possible if thresholding had been based on either $\ell_1$ or $\ell_2$ oracle inequalities.

We carry out simulations and show that the thresholded scaled Lasso performs well in model selection. Finally, we estimate a set of growth regression documenting the existence of a threshold in the amount of debt relative to GDP. Several parameters change when the threshold is crossed making the effect of high debt on future growth unclear.

Future work includes investigating the effect of multiple thresholds.

\vspace{2in}
\begin{center}
{\bf APPENDIX}
\end{center}

The following result is needed in the proofs of Theorems \ref{thm1} and \ref{thm2}. It is similar to Lemma 6 in \cite{lee2012lasso} but allows for random regressors and non-gaussian error terms.
\begin{lemma}\label{Xu}
Let Assumption 1 be satisfied. Then,
\begin{align*}
\enVert[2]{\frac{1}{n}X'(\hat{\tau})U}_{\ell_\infty}=O_p\del[3]{\sqrt{\frac{\log(m)}{n}}}
\end{align*}  
\end{lemma}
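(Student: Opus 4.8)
\medskip
\noindent\emph{Proof strategy.} The plan is to collapse the implicit supremum over $\tau$ (hidden in $\hat\tau$) into a maximum over the path of a random walk, and then to use a martingale maximal inequality rather than a crude union bound, which is what keeps the rate at $\sqrt{\log(m)/n}$ rather than $\sqrt{\log(mn)/n}$. Order the $Q_i$ as $Q_{(1)}<\cdots<Q_{(n)}$ (a.s.\ distinct, since $Q_1$ is continuously distributed) and let $(X_{(k)},U_{(k)})_{k=1}^{n}$ be the associated concomitants. For $1\le j\le m$ set $S_\ell^{(j)}=\sum_{k=1}^{\ell}X_{(k)}^{(j)}U_{(k)}$, $\ell=0,\dots,n$. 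Since $\tau\mapsto 1\{Q_i<\tau\}$ is piecewise constant with jumps only at the $Q_i$, every coordinate of $\tfrac1n X'(\hat\tau)U$ equals $\tfrac1n S_\ell^{(j)}$ for some $j\in\{1,\dots,m\}$ and some $\ell\in\{0,\dots,n\}$ (namely $\ell=n$ for the first $m$ coordinates, which do not involve $\hat\tau$, and $\ell=\#\{i:Q_i<\hat\tau\}$ for the last $m$). Hence
\begin{align*}
\enVert[2]{\tfrac1n X'(\hat\tau)U}_{\ell_\infty}\ \le\ \max_{1\le j\le m}\ \max_{0\le \ell\le n}\ \tfrac1n\,\bigl|S_\ell^{(j)}\bigr|,
\end{align*}
and it suffices to show the right-hand side is $O_p\bigl(\sqrt{\log(m)/n}\bigr)$.

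Next I would condition on $\mathcal Q=\sigma(Q_1,\dots,Q_n)$. Because the i.i.d.\ pairs $(X_i,U_i)$ are independent of $(Q_1,\dots,Q_n)$ and the sorting permutation is a function of $\mathcal Q$ alone, exchangeability gives that, conditionally on $\mathcal Q$, the concomitants $(X_{(k)},U_{(k)})_{k=1}^{n}$ are again i.i.d.\ copies of $(X_1,U_1)$. Thus, for each fixed $j$, the increments $X_{(k)}^{(j)}U_{(k)}$, $k=1,\dots,n$, are conditionally i.i.d., mean zero (regression exogeneity, $E[X_1^{(j)}U_1]=0$), and---being products of two subgaussian variables---subexponential with constants uniform in $j$ and in the value of $\mathcal Q$; in particular $(S_\ell^{(j)})_{\ell=0}^{n}$ is a martingale under $P(\cdot\mid\mathcal Q)$ with respect to $\sigma(\mathcal Q,X_{(1)}^{(j)}U_{(1)},\dots,X_{(\ell)}^{(j)}U_{(\ell)})$.

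The core estimate is a maximal tail bound. For $\theta\ge 0$ the process $e^{\theta S_\ell^{(j)}}$ is a nonnegative submartingale, so Doob's maximal inequality together with the subexponential moment generating function bound yields, for $|\theta|\le c_0$,
\begin{align*}
P\bigl(\max_{0\le \ell\le n}S_\ell^{(j)}\ge nt \mid \mathcal Q\bigr)\ \le\ e^{-\theta nt}\bigl(E e^{\theta X_1^{(j)}U_1}\bigr)^{n}\ \le\ e^{-\theta nt+Cn\theta^2};
\end{align*}
optimizing over $\theta$ and repeating for $-S_\ell^{(j)}$ gives $P\bigl(\max_{0\le\ell\le n}|S_\ell^{(j)}|\ge nt\mid\mathcal Q\bigr)\le 2e^{-cn\min\{t^2,t\}}$ with $c,C,c_0>0$ independent of $j$ and of $\mathcal Q$. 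A union bound over $j=1,\dots,m$ followed by taking expectations over $\mathcal Q$ (the bound being deterministic) gives
\begin{align*}
P\bigl(\max_{1\le j\le m}\max_{0\le\ell\le n}\tfrac1n|S_\ell^{(j)}|>t\bigr)\ \le\ 2m\,e^{-cn\min\{t^2,t\}}.
\end{align*}
Choosing $t=K\sqrt{\log(3m)/n}$, which lies in the quadratic regime for $n$ large since $\log(m)/n\to 0$ by Assumption~1, bounds the probability by $2\cdot 3^{-cK^2}m^{1-cK^2}\le 2\cdot 3^{-cK^2}$ as soon as $cK^2\ge 1$, which is arbitrarily small for $K$ large; since $\sqrt{\log(3m)/n}\asymp\sqrt{\log(m)/n}$, the claim follows.

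The main obstacle is the presence of the data-dependent $\hat\tau$ inside the norm, and two observations dispose of it: (i) $\tau\mapsto 1\{Q_i<\tau\}$ is piecewise constant, which turns the implicit supremum over $T$ into a maximum over the partial-sum path indexed by $\ell=0,\dots,n$; and (ii) that path is a martingale, so Doob's inequality controls its running maximum essentially at the cost of a single term---whereas a naive union bound over the $\le n+1$ candidate break locations would only deliver the rate $\sqrt{\log(mn)/n}$, which under Assumption~1(ii) is genuinely coarser than the asserted $\sqrt{\log(m)/n}$ when $m$ is small relative to $n$. The conditioning on $\mathcal Q$, legitimate by the independence of $(X_1,U_1)$ and $Q_1$, is exactly what keeps the reordered increments i.i.d.\ so that (ii) applies with constants free of $\mathcal Q$; this is also the step where the fixed-design gaussian argument behind Lemma~6 of \cite{lee2012lasso} is upgraded to the random subgaussian design, the gaussian maximal inequality being replaced by the Bernstein/Doob bound above.
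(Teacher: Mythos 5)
Your proof is correct and follows essentially the same route as the paper's: both condition on $(Q_1,\dots,Q_n)$ to turn the supremum over $\tau$ into a maximum of partial sums of the $Q$-sorted products $X_i^{(j)}U_i$, control that running maximum via a maximal inequality combined with a Bernstein-type subexponential tail bound, and finish with a union bound over the coordinates at level $t\asymp\sqrt{\log(m)/n}$. The only difference is the choice of maximal inequality: the paper invokes Corollary 4 of \cite{montgomery1993comparison} to compare the maximum of the partial sums with the full sum and then applies Corollary 5.17 of \cite{vershynin2010introduction}, whereas you apply Doob's inequality to the exponential submartingale directly --- an interchangeable ingredient that yields the same rate.
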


\begin{proof}
First, note that $\enVert[2]{\frac{1}{n}X'(\hat{\tau})U}_{\ell_\infty}\leq \sup_{\tau\in T}\enVert[2]{\frac{1}{n}X'(\tau)U}_{\ell_\infty}$ such that it suffices to bound the right hand side.
Let $\epsilon>0$ be arbitrary. By the independence of $(X_1,...,X_n, U_1,...,U_n)$ and $(Q_1,...,Q_n)$ one has for $j=1,...,m$,
\begin{align}
P\del[3]{\sup_{\tau\in T}\envert[2]{\frac{1}{n}\sum_{i=1}^nX_i^{(j)}U_i1_{\cbr[0]{Q_i<\tau}}}>\epsilon \sVert[1](Q_1,...,Q_n)}
&=
P\del[3]{\max_{1\leq k\leq n}\envert[2]{\frac{1}{n}\sum_{i=1}^kX_i^{(j)}U_i}>\epsilon \sVert[1](Q_1,...,Q_n)}\notag\\
&=
P\del[3]{\max_{1\leq k\leq n}\envert[2]{\frac{1}{n}\sum_{i=1}^kX_i^{(j)}U_i}>\epsilon}\label{first}
\end{align}
almost surely, where the first equality used that conditional on $(Q_1,...,Q_n)$, $\del[1]{1_{\cbr[0]{Q_1<\tau}},...,1_{\cbr[0]{Q_n<\tau}}}$ can only take $n$ different values (and sorted $\cbr[0]{X_i, U_i, Q_i}_{i=1}^n$ by $(Q_1,...,Q_n)$ in ascending order). The second equality used the independence $(X_1,...,X_n, U_1,...,U_n)$ and $(Q_1,...,Q_n)$. Next, by Corollary 4 in \cite{montgomery1993comparison} 
there exists a universal constant $c>0$ such that
\begin{align}
P\del[3]{\max_{1\leq k\leq n}\envert[2]{\frac{1}{n}\sum_{i=1}^kX_i^{(j)}U_i}>\epsilon}
\leq
cP\del[3]{\envert[2]{\sum_{i=1}^nX_i^{(j)}U_i}>\frac{\epsilon n}{c}}
\end{align}
As $X_i^{(j)}U_i$ is subexponential (the product of two subgaussian variables is subexponential) for all $i=1,...,n$ and $j=1,...,m$, Corollary 5.17 in \cite{vershynin2010introduction} yields
\begin{align}
P\del[3]{\envert[2]{\sum_{i=1}^nX_i^{(j)}U_i}>\frac{\epsilon n}{c}}
\leq
2\exp\del[2]{-d\sbr[1]{(\epsilon/K)^2\wedge (\epsilon/K)}n}
\label{vershynin}
\end{align}
where $d>0$ and $K=K(c)>0$ are absolute constants. Therefore, choosing $\epsilon=A\sqrt{\frac{\log(m)}{n}}$ for some $A\geq 1$ yields
\begin{align*}
P\del[3]{\envert[2]{\sum_{i=1}^nX_i^{(j)}U_i}>\frac{\epsilon n}{c}}
\leq
2\exp\del[3]{-\frac{dA}{K^2\vee K}\sbr[3]{\frac{\log(m)}{n}\wedge \sqrt{\frac{\log(m)}{n}}}n}
\leq
2\exp\del[3]{-\frac{dA}{K^2\vee K}\log(m)}
\end{align*}
where the second estimate used that $\log(m)/n\to 0$ such that $\frac{\log(m)}{n}$ is smaller than its square root for $n$ sufficiently large. Hence,
\begin{align*}
P\del[3]{\sup_{\tau\in T}\envert[2]{\frac{1}{n}\sum_{i=1}^nX_i^{(j)}U_i1_{\cbr[0]{Q_i<\tau}}}>\epsilon \sVert[1](Q_1,...,Q_n)}
\leq
2c\exp\del[3]{-\frac{dA}{K^2\vee K}\log(m)}
\end{align*}
for all $j=1,...,m$ almost surely. Taking expectations over $(Q_1,...,Q_n)$ yields
\begin{align}
P\del[3]{\sup_{\tau\in T}\envert[2]{\frac{1}{n}\sum_{i=1}^nX_i^{(j)}U_i1_{\cbr[0]{Q_i<\tau}}}>\epsilon}
\leq
2c\exp\del[3]{-\frac{dA}{K^2\vee K}\log(m)}\label{bound}.
\end{align}
Therefore, combining (\ref{vershynin}) and (\ref{bound}), a union bound over $2m$ terms yields
\begin{align*}
P\del[3]{\sup_{\tau\in T}\enVert[2]{\frac{1}{n}X'(\tau)U}_{\ell_\infty}>\epsilon}
\leq
2m(1+c)\exp\del[3]{-\frac{dA}{K^2\vee K}\log(m)}.
\end{align*}
Choosing $A$ sufficiently large implies that $\sup_{\tau\in T}\enVert[2]{\frac{1}{n}X'(\tau)U}_{\ell_\infty}=O_p\del[2]{\sqrt{\frac{\log(m)}{n}}}$ using the 
definition of $\epsilon= A \sqrt{\log(m)/n}$.
\end{proof}

\begin{lemma}\label{scaling}
Let assumption 1 be satisfied. Then, $\sup_{\tau\in T}\max_{1\leq j\leq 2m}\enVert[1]{X^{(j)}(\tau)}_n=O_p(1)$ and $\min_{1\leq j\leq 2m}\enVert[1]{X^{(j)}(t_0)}_n$ is bounded away from zero wpa1.
\end{lemma}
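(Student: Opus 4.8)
The plan is to reduce both statements to a maximal Bernstein-type inequality for sums of i.i.d.\ centered subexponential random variables, which is the tool already used in the proof of Lemma~\ref{Xu}. For the first statement, the key point is that no chaining over $\tau$ is needed: for $1\le j\le m$ the $j$th column of $X(\tau)$ equals $\del[1]{X_1^{(j)},\dots,X_n^{(j)}}'$ and does not depend on $\tau$, so $\enVert[1]{X^{(j)}(\tau)}_n^2=\frac{1}{n}\sum_{i=1}^n{X_i^{(j)}}^2$; for $m+1\le j\le 2m$ the $j$th column has entries $X_i^{(j-m)}1_{\cbr[0]{Q_i<\tau}}$, whence $\enVert[1]{X^{(j)}(\tau)}_n^2=\frac{1}{n}\sum_{i=1}^n{X_i^{(j-m)}}^21_{\cbr[0]{Q_i<\tau}}\le\frac{1}{n}\sum_{i=1}^n{X_i^{(j-m)}}^2$. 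Hence
\[
\sup_{\tau\in T}\max_{1\le j\le 2m}\enVert[1]{X^{(j)}(\tau)}_n^2\le\max_{1\le k\le m}\frac{1}{n}\sum_{i=1}^n{X_i^{(k)}}^2,
\]
so it suffices to show the right-hand side is $O_p(1)$. Since each $X_i^{(k)}$ is (uniformly) subgaussian by Assumption~1, ${X_i^{(k)}}^2$ is subexponential with a uniformly bounded subexponential norm and $E\del[1]{{X_i^{(k)}}^2}$ is bounded uniformly in $i,k$. Decomposing $\frac{1}{n}\sum_i{X_i^{(k)}}^2=E\del[1]{{X_i^{(k)}}^2}+\frac{1}{n}\sum_i\del[1]{{X_i^{(k)}}^2-E\del[1]{{X_i^{(k)}}^2}}$, the first term is $O(1)$ uniformly in $k$, while for the centered part I would apply Corollary~5.17 in \cite{vershynin2010introduction} to the i.i.d.\ centered subexponential summands together with a union bound over the $m$ coordinates, obtaining for every fixed $\epsilon>0$
\[
P\del[3]{\max_{1\le k\le m}\envert[2]{\frac{1}{n}\sum_{i=1}^n\del[1]{{X_i^{(k)}}^2-E\del[1]{{X_i^{(k)}}^2}}}>\epsilon}\le 2m\exp\del[1]{-c\,n\,(\epsilon^2\wedge\epsilon)}\longrightarrow 0
\]
because $\log(m)/n\to 0$ by Assumption~1(ii); thus the centered term is $o_p(1)$ and the first statement follows.

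For the second statement I would argue coordinate by coordinate, distinguishing $j\le m$ from $j>m$. For $1\le j\le m$ we have $\enVert[1]{X^{(j)}(t_0)}_n^2=\frac{1}{n}\sum_i{X_i^{(j)}}^2$ and $\min_{1\le j\le m}E\del[1]{{X_1^{(j)}}^2}\ge\underline{c}>0$ by Assumption~1, so the same concentration-plus-union-bound argument gives $\min_{1\le j\le m}\frac{1}{n}\sum_i{X_i^{(j)}}^2\ge\underline{c}-o_p(1)\ge\underline{c}/2$ wpa1. For $m+1\le j\le 2m$ we have $\enVert[1]{X^{(j)}(t_0)}_n^2=\frac{1}{n}\sum_i{X_i^{(j-m)}}^21_{\cbr[0]{Q_i<t_0}}$, and since $X_1$ is independent of $Q_1$ with $Q_1\sim\mathcal{U}[0,1]$ and $t_0\in(0,1)$ one has $E\sbr[1]{{X_1^{(j-m)}}^21_{\cbr[0]{Q_1<t_0}}}=t_0\,E\del[1]{{X_1^{(j-m)}}^2}\ge\underline{c}\,t_0>0$; the summands ${X_i^{(j-m)}}^21_{\cbr[0]{Q_i<t_0}}$ are i.i.d.\ and subexponential (being dominated by the subexponential ${X_i^{(j-m)}}^2$), so Corollary~5.17 in \cite{vershynin2010introduction} plus a union bound over the $m$ coordinates gives $\min_{m+1\le j\le 2m}\frac{1}{n}\sum_i{X_i^{(j-m)}}^21_{\cbr[0]{Q_i<t_0}}\ge\underline{c}\,t_0-o_p(1)\ge\underline{c}\,t_0/2$ wpa1. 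Combining the two cases (and using $t_0<1$), $\min_{1\le j\le 2m}\enVert[1]{X^{(j)}(t_0)}_n^2\ge\underline{c}\,t_0/2>0$ wpa1, which is the assertion.

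\textbf{Main obstacle.} There is no serious obstacle here. The uniformity over $\tau\in T$ in the first statement dissolves once one drops the indicators (columns $m+1,\dots,2m$) or observes the absence of $\tau$ (columns $1,\dots,m$), so no empirical-process or chaining argument is required; the sole ingredient is the maximal Bernstein inequality over $m$ coordinates for centered subexponentials combined with $\log(m)/n\to 0$, exactly as in Lemma~\ref{Xu}. The one point that requires a little care is the second statement, where one must check that the relevant population means are bounded \emph{away from zero}: for the last $m$ coordinates this rests on $Q_1$ being uniform on $[0,1]$ and independent of $X_1$, so that $P(Q_1<t_0)=t_0>0$, together with the Assumption~1 lower bound on $\min_{1\le j\le m}E\del[1]{{X_1^{(j)}}^2}$.
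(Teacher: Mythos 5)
Your proposal is correct and follows essentially the same route as the paper's proof: reduce the supremum over $\tau$ to the indicator-free columns, then apply the Bernstein-type bound for centered subexponential variables (Corollary 5.17 of \cite{vershynin2010introduction}) with a union bound over coordinates and $\log(m)/n\to 0$, and for the lower bound use independence of $X_1$ and $Q_1$ with $Q_1\sim\mathcal{U}[0,1]$ to get the population means bounded away from zero before concentrating. The only cosmetic differences are that you work with squared norms directly rather than via subadditivity of the square root, and you treat both coordinate blocks explicitly where the paper simply notes the minimum is attained on the last $m$ columns.
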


\begin{proof}
Consider the first claim and note that $\sup_{\tau\in T}\max_{1\leq j\leq 2m}\enVert[1]{X^{(j)}(\tau)}_n=\max_{1\leq j\leq m}\enVert[1]{X^{(j)}(\tau)}_n$. As $X_1^{(j)}$ is uniformly subgaussian in $j=1,...,m$ it also holds that $E\del[1]{{X_1^{(j)}}^2}$ is uniformly bounded (this follows by Lemma 2.2.1 in \cite{van1996weak} and the inequalities at the bottom of page 95 in that reference). Thus, by the triangle inequality and subadditivity of $x\mapsto \sqrt{x}$,
\begin{align*}
\sqrt{\frac{1}{n}\sum_{i=1}^n{X_i^{(j)}}^2}
\leq
\sqrt{\frac{1}{n}\envert[2]{\sum_{i=1}^n\del[2]{{X_i^{(j)}}^2-E{X_i^{(j)}}^2}}} + \sqrt{E{X_1^{(j)}}^2}
\end{align*}
and hence it suffices to bound $\sqrt{\frac{1}{n}\envert[2]{\sum_{i=1}^n\del[2]{{X_i^{(j)}}^2-E{X_i^{(j)}}^2}}}$, or, equivalently, $\frac{1}{n}\envert[2]{\sum_{i=1}^n\del[2]{{X_i^{(j)}}^2-E{X_i^{(j)}}^2}}$ uniformly in $j=1,...,n$ by a constant with probability tending to 1. As the ${X_i^{(j)}}^2$ are uniformly subexponential (as they are a product of uniformly subgaussian random variables) in $j=1,...,m$ Corollary 5.17 in \cite{vershynin2010introduction} implies that for any $\epsilon>0$ there exist constants $c,K>0$ (see
\cite{vershynin2010introduction} for the exact meaning of the constants) such that
\begin{align*}
P\del[2]{\frac{1}{n}\envert[2]{\sum_{i=1}^n\del[2]{{X_i^{(j)}}^2-E{X_i^{(j)}}^2}}>\epsilon}
\leq
2\exp\del[2]{-c\sbr[1]{(\epsilon/K)^2\wedge (\epsilon/K)}n}
\end{align*}
for all $j=1,...,m$. Now, choosing $\epsilon=K\vee K/c$, the union bound yields that
\begin{align*}
P\del[2]{\max_{1\leq j\leq m}\frac{1}{n}\envert[2]{\sum_{i=1}^n\del[2]{{X_i^{(j)}}^2-E{X_i^{(j)}}^2}}>\epsilon}
\leq 
2me^{-n}
\to 0
\end{align*}
as $\log(m)/n\to 0$. Thus, $K\vee K/c$ is large enough to be the sought constant.

Now turn to the second claim and observe $\min_{1\leq j\leq 2m}\enVert[1]{X^{(j)}(t_0)}_n=\min_{m+1\leq j\leq 2m}\enVert[1]{X^{(j)}(t_0)}_n$. Note that by Assumption 1,
\[\min_{1\leq j\leq m}E\del[1]{{X^{(j)}_{1}}^21_{\cbr[0]{Q_1<t_0}}}=\min_{1\leq j\leq m}E\del[1]{{X^{(j)}_1}^2}t_0=:r>0.\]
where the first equality used the independence of $X_1$ and $Q_1$ as well as that $Q_1$ us uniformly distributed on $[0,1]$. Therefore, it suffices to show that $\max_{1\leq j\leq m}\frac{1}{n}\envert[2]{\sum_{i=1}^n\del[2]{{X_i^{(j)}}^21_{\cbr[0]{Q_i<t_0}}-E{X_i^{(j)}}^21_{\cbr[0]{Q_i<t_0}}}}\leq d\leq r/2$ with probability tending to one. As ${X_1^{(j)}}^21_{\cbr[0]{Q_1<t_0}}$ is subexponential it follows once more from Corollary 5.17 in \cite{vershynin2010introduction} that for $d=K\wedge r/2$
\begin{align*}
P\del[3]{\frac{1}{n}\envert[2]{\sum_{i=1}^n\del[2]{{X_i^{(j)}}^21_{\cbr[0]{Q_i<t_0}}-E{X_i^{(j)}}^21_{\cbr[0]{Q_i<t_0}}}}>d}
\leq
2\exp\del[2]{-c\sbr[1]{(d/K)^2\wedge (d/K)}n}
\leq
2e^{\frac{-cd^2}{K^2}n}
\end{align*}
for $j=1,...,m$. Thus, by the union bound
\begin{align*}
P\del[3]{\max_{1\leq j\leq m}\frac{1}{n}\envert[2]{\sum_{i=1}^n\del[2]{{X_i^{(j)}}^21_{\cbr[0]{Q_i<t_0}}-E{X_i^{(j)}}^21_{\cbr[0]{Q_i<t_0}}}}\geq d}
\leq
2me^{\frac{-cd^2}{K^2}n}
\end{align*}
which tends to zero as $\frac{\log(m)}{n}\to 0$ by assumption 1.
\end{proof}

%The next Lemma is not used directly in the proofs of Theorems \ref{thm1}-\ref{thmbreak} but indicates that Assumption 4 is free under Assumption 1.
%\begin{lemma}
%Assumption 4 is automatically satisfied under Assumption 1 in the random design.
%\end{lemma}
%\begin{proof}
%Let $E_Q$ denote integrations over $(Q_1,....,Q_n)$ and $E_X$ integration over $(X_1,...,X_n)$. Then by the independence of $(Q_1,....,Q_n)$ and $(X_1,...,X_n)$, note that for fixed $1\leq j,k\leq m$,
%\begin{align*}
%E\sbr[3]{\sup_{|\tau - \tau_0| < \eta } \frac{1}{n} \sum_{i=1}^n \envert[1]{X_i^{(j)}X_i^{(k)}} |1_{\{Q_i < \tau_0\}} - 1_{ \{ Q_i < \tau \}} |}
%&\leq
%E_X\sbr[3]{\frac{1}{n} \sum_{i=1}^n \envert[1]{X_i^{(j)}X_i^{(k)}} E_Q\sup_{|\tau - \tau_0| < \eta }|1_{\{Q_i < \tau_0\}} - 1_{ \{ Q_i < \tau \}} |}\\
%&\leq
%E_X\sbr[3]{\frac{1}{n} \sum_{i=1}^n \envert[1]{X_i^{(j)}X_i^{(k)}}}\cdot 2\eta\\
%&\leq
%C\eta
%\end{align*}
%where the existence of the constant $C>0$ follows from the fact that ${X_i^{(j)}X_i^{(k)}}$ is uniformly subexponential (as it is a product of uniformly subgaussian variables) such that $E\envert[1]{{X_i^{(j)}X_i^{(k)}}}$ is uniformly bounded (this follows by Lemma 2.2.1 in \cite{van1996weak} and the inequalities at the bottom of page 95 in that reference)
%\end{proof}

\begin{proof}[Proof of Theorem \ref{thm1}]
Note first that when $\delta_0=0$, for any random variable $V$
\[ Y_i = X_i' \beta_0 + U_i = X_i'\beta_0+X_i'1_{\cbr{Q_i<V}}\delta_0+U_i,\]
since 
\[X_i'1_{\cbr{Q_i<V}}\delta_0=0.\]

 In particular, this is true for $V=\hat{\tau}$. Next, since $\hat{\alpha}=(\hat{\beta}', \hat{\delta}')'$ satisfies the Karush-Kuhn-Tucker conditions for a minimum, one has
\begin{align*}
-\frac{1}{n}X'(\hat{\tau})\del[1]{Y-X(\hat{\tau})\hat{\alpha}}+\lambda D(\hat{\tau})z(\hat{\tau})
=
0
\end{align*}
where $\enVert[0]{z(\hat{\tau})}_{\ell_\infty}\leq 1$ and $z(\hat{\tau})_j=sign(\hat{\alpha}_j)$ if $\hat{\alpha}_j\neq 0$ and. This can be rewritten as
\begin{align*}
\frac{1}{n}X'(\hat{\tau})X(\hat{\tau})\del[0]{\hat{\alpha}-\alpha_0}
=
\frac{1}{n}X'(\hat{\tau})U_i-\lambda D(\hat{\tau})z(\hat{\tau}).
\end{align*}
which is equivalent to
\begin{align*}
\Sigma(\hat{\tau})\del[0]{\hat{\alpha}-\alpha_0}
=
\del[1]{\Sigma(\hat{\tau})-\frac{1}{n}X'(\hat{\tau})X(\hat{\tau})}\del[0]{\hat{\alpha}-\alpha_0}+\frac{1}{n}X'(\hat{\tau})U-\lambda D(\hat{\tau})z(\hat{\tau}).
\end{align*}
Next, $\Theta(\tau)=\Sigma(\tau)^{-1}$ exists for all $\tau\in T$ under Assumption 1 when $\Sigma$ has full rank as argued in the discussion of Assumption 2. In fact, $\kappa=\kappa(s,3,T)>0$ with probability tending to one as is needed in order to invoke Theorem 2 of \cite{lee2012lasso} below. It follows that $\Sigma(\hat{\tau})$ is invertible with inverse $\Theta(\hat{\tau})$. Thus,
\begin{align*}
\hat{\alpha}-\alpha_0
=
\Theta(\hat{\tau})\del[1]{\Sigma(\hat{\tau})-\frac{1}{n}X'(\hat{\tau})X(\hat{\tau})}\del[0]{\hat{\alpha}-\alpha_0}+\Theta(\hat{\tau})\frac{1}{n}X'(\hat{\tau})U-\lambda \Theta(\hat{\tau})D(\hat{\tau})z(\hat{\tau}).
\end{align*}
Now recall that for matrices $A,B$ and a vector $c$ of compatible dimensions, one has $\enVert[0]{ABc}_{\ell_\infty}\leq \enVert[0]{A}_{\ell_\infty}\enVert[0]{Bc}_{\ell_\infty}\leq \enVert[0]{A}_{\ell_\infty}\enVert[0]{B}_{\infty}\enVert[0]{c}_{\ell_1}$ (see, eg, \cite{horn2013johnson} Chapter 5). Using this,
\begin{align}
\enVert[1]{\hat{\alpha}-\alpha_0}_{\ell_\infty}
&\leq
\enVert[1]{\Theta(\hat{\tau})}_{\ell_\infty}\enVert[2]{\del[1]{\Sigma(\hat{\tau})-\frac{1}{n}X'(\hat{\tau})X(\hat{\tau})}}_{\infty}\enVert[1]{\del[0]{\hat{\alpha}-\alpha_0}}_{\ell_1}\notag\\
&+
\enVert[1]{\Theta(\hat{\tau})}_{\ell_\infty}\enVert[2]{\frac{1}{n}X'(\hat{\tau})U}_{\ell_\infty}+\lambda \enVert[1]{\Theta(\hat{\tau})}_{\ell_\infty}\enVert[1]{D(\hat{\tau})}_{\ell_\infty}\enVert[1]{z(\hat{\tau})}_{\ell_\infty}\notag\\
&\leq
\sup_{\tau\in T}\enVert[1]{\Theta(\tau)}_{\ell_\infty}\sup_{\tau\in T}\enVert[2]{\del[1]{\Sigma(\tau)-\frac{1}{n}X'(\tau)X(\tau)}}_{\infty}\enVert[1]{\del[0]{\hat{\alpha}-\alpha_0}}_{\ell_1}\notag\\
&+
\sup_{\tau\in T}\enVert[1]{\Theta(\tau)}_{\ell_\infty}\enVert[2]{\frac{1}{n}X'(\hat{\tau})U}_{\ell_\infty}+\lambda \sup_{\tau\in T}\enVert[1]{\Theta(\tau)}_{\ell_\infty}\max_{1\leq j\leq m}\enVert[1]{X^{(j)}}_n\label{alphainf1}
\end{align}
where we have also used $\enVert[1]{z(\hat{\tau})}_{\ell_\infty}\leq 1$. Next, note that $\sup_{\tau\in T}\enVert[1]{\Theta(\tau)}_{\ell_\infty}$ is bounded by assumption. Furthermore, by Lemma \ref{Xu}, $\enVert[2]{\frac{1}{n}X'(\hat{\tau})U}_{\ell_\infty}=O_p\del[2]{\sqrt{\frac{\log(m)}{n}}}$ while $\max_{1\leq j\leq m}\enVert[1]{X^{(j)}}_n=O_p(1)$ by Lemma \ref{scaling}. Finally, it follows by the arguments on page A6 and the last inequality before Appendix B in \cite{lee2012lasso} that $\sup_{\tau\in T}\enVert[2]{\del[1]{\Sigma(\tau)-\frac{1}{n}X'(\tau)X(\tau)}}_{\infty}=O_p\del[2]{\sqrt{\frac{\log(mn)}{n}}}$ while $\enVert[1]{\hat{\alpha}-\alpha_0}_{\ell_1}=O_p\del[2]{s\sqrt{\frac{\log(m)}{n}}}$ by Theorem 2 in the same reference.  Using this in (\ref{alphainf1}) yields, with $\lambda = O\del[1]{\sqrt{\log(m)/n}}$,  
\begin{align*}
\enVert[1]{\hat{\alpha}-\alpha_0}_{\ell_\infty}
=
O_p\del[3]{\sqrt{\frac{\log(m)}{n}}\del[3]{s\sqrt{\frac{\log(mn)}{n}}+2}}
=
O_p\del[3]{\sqrt{\frac{\log(m)}{n}}}
\end{align*}
as $s\sqrt{\frac{\log(mn)}{n}}\to 0$.
\end{proof}

\begin{proof}[Proof of Theorem \ref{thm2}]
First, since $\hat{\alpha}=(\hat{\beta}', \hat{\delta}')'$ satisfies the Karush-Kuhn-Tucker conditions for a minimum, one has
\begin{align*}
-\frac{1}{n}X'(\hat{\tau})\del[1]{Y-X(\hat{\tau})\hat{\alpha}}+\lambda D(\hat{\tau})z(\hat{\tau})
=
0
\end{align*}
where $\enVert[0]{z(\hat{\tau})}_{\ell_\infty}\leq 1$ and$z(\hat{\tau})_j=sign(\hat{\alpha}_j)$ if $\hat{\alpha}_j\neq 0$. This can be rewritten as
\begin{align*}
-\frac{1}{n}X'(\hat{\tau})\del[1]{X(\tau_0)\alpha_0-X(\hat{\tau})\hat{\alpha}}
=
\frac{1}{n}X'(\hat{\tau})U-\lambda D(\hat{\tau})z(\hat{\tau})
\end{align*}
which is equivalent to
\begin{align*}
\frac{1}{n}X'(\hat{\tau})X(\hat{\tau})\del[1]{\hat{\alpha}-\alpha_0}-\frac{1}{n}X'(\hat{\tau})\del[1]{X(\tau_0)-X(\hat{\tau})}\alpha_0
=
\frac{1}{n}X'(\hat{\tau})U-\lambda D(\hat{\tau})z(\hat{\tau}).
\end{align*}
The above display can be rewritten as
\begin{align*}
\Sigma(\tau_0)\del[1]{\hat{\alpha}-\alpha_0}-\frac{1}{n}X'(\hat{\tau})\del[1]{X(\tau_0)-X(\hat{\tau})}\alpha_0
=
\del[1]{\Sigma(\tau_0)-\frac{1}{n}X'(\hat{\tau})X(\hat{\tau})}(\hat{\alpha}-\alpha_0)+\frac{1}{n}X'(\hat{\tau})U-\lambda D(\hat{\tau})z(\hat{\tau}).
\end{align*}
Next $\Theta(\tau_0)=\Sigma(\tau_0)^{-1}$ exists under Assumption 1 by the discussion after Assumption 2 as $\Sigma$ is assumed to exist. In fact, $\kappa=\kappa(s,5, S)>0$ where $S=\cbr{|\tau-\tau_0|\leq \eta_0}$ and $\eta_0=n^{-1}\vee K_1\sqrt{s\lambda}$ \footnote{Here $K_1=\sqrt{{7C_1C_2}}$ where $C_2$ is the constants proven to exist in Lemma \ref{scaling} in the appendix ensuring that $\sup_{\tau\in T}\max_{1\leq j\leq 2m}\enVert[1]{X^{(j)}(\tau)}_n\leq C_2$ with arbitrarily large probability (more precisely, for any $\epsilon>0$ there exists a $C_2$ such that $\sup_{\tau\in T}\max_{1\leq j\leq 2m}\enVert[1]{X^{(j)}(\tau)}_n\leq C_2$ with probability at least $1-\epsilon$).} is satisfied with probability tending to one as is needed in order to invoke Theorem 3 of \cite{lee2012lasso} below (it is even satisfied when $S$ is replaced by $T$). Thus, one may rewrite the above display as
\begin{align*}
\hat{\alpha}-\alpha_0
&=
\Theta(\tau_0)\frac{1}{n}X'(\hat{\tau})\del[1]{X(\tau_0)-X(\hat{\tau})}\alpha_0+\Theta(\tau_0)\del[1]{\Sigma(\tau_0)-\frac{1}{n}X'(\hat{\tau})X(\hat{\tau})}(\hat{\alpha}-\alpha_0)\\
&+\Theta(\tau_0)\frac{1}{n}X'(\hat{\tau})U-\lambda \Theta(\tau_0)D(\hat{\tau})z(\hat{\tau})
\end{align*}
such that arguments similar to those leading to (\ref{alphainf1}) yield
\begin{eqnarray}
\enVert[1]{\hat{\alpha}-\alpha_0}_{\ell_\infty}
&\leq & 
\enVert[1]{\Theta(\tau_0)}_{\ell_\infty}\enVert[2]{\frac{1}{n}X'(\hat{\tau})\del[1]{X(\tau_0)-X(\hat{\tau})}\alpha_0}_{\ell_\infty} \nonumber \\
&+&\enVert[1]{\Theta(\tau_0)}_{\ell_\infty}\enVert[2]{\del[1]{\Sigma(\tau_0)-\frac{1}{n}X'(\hat{\tau})X(\hat{\tau})}}_{\infty}\enVert[1]{\hat{\alpha}-\alpha_0}_{\ell_1} \nonumber \\
&+& \enVert[1]{\Theta(\tau_0)}_{\ell_\infty}\enVert[2]{\frac{1}{n}X'(\hat{\tau})U}_{\ell_\infty}+\lambda \enVert[1]{\Theta(\tau_0)}_{\ell_\infty}\max_{1\leq j\leq n}\enVert[1]{X^{(j)}}_n
\label{feqn}
\end{eqnarray}
where we used that $\enVert[1]{z(\hat{\tau})}_{\ell_\infty}\leq 1$. First, note that $\enVert[1]{\Theta(\tau_0)}_{\ell_\infty}$ is bounded by assumption. Next, denoting by $Z(\tau_0)$ and $Z(\hat{\tau})$ the last $m$ columns of $X(\tau_0)$ and $X(\hat{\tau})$, respectively, one has
\begin{align}
\enVert[2]{\frac{1}{n}X'(\hat{\tau})\del[1]{X(\tau_0)-X(\hat{\tau})}\alpha_0}_{\ell_\infty}
=
\enVert[2]{\frac{1}{n}X'(\hat{\tau})\del[1]{Z(\tau_0)-Z(\hat{\tau})}\delta_0}_{\ell_\infty}\label{rhs}
\end{align}
By Theorem 3 in \cite{lee2012lasso} one has $\envert[0]{\hat{\tau}-\tau_0}=O_p\del[1]{s\frac{\log(m)}{n}}$ such that the probability of $\mathcal{A}=\cbr{\envert[0]{\hat{\tau}-\tau_0}\leq K s\frac{\log(m)}{n}}$ can be made arbitrarily large by choosing $K>0$ sufficiently large. Thus, on $\mathcal{A}$, 
\begin{align*}
\enVert[2]{\frac{1}{n}X'(\hat{\tau})\del[1]{Z(\tau_0)-Z(\hat{\tau}))}\delta_0}_{\ell_\infty}
&\leq
\sup_{1\leq j,k\leq m}\frac{1}{n}\sum_{i=1}^n\envert[2]{X_i^{(j)}X_i^{(k)}}\envert[1]{1_{\cbr{Q_i<\tau_0}}-1_{\cbr{Q_i<\hat{\tau}}}}\enVert[1]{\delta_0}_{\ell_1}\\
&\leq
KC_1s\envert[1]{J(\delta_0)}\frac{\log(m)}{n}
\end{align*}
by Assumptions 1 and 4. As we have assumed that $s|J(\delta_0)|\log(m)^{1/2}/\sqrt{n}\to 0$, we have in particular that

\begin{align}
\enVert[2]{\frac{1}{n}X'(\hat{\tau})\del[1]{X(\tau_0)-X(\hat{\tau})}\alpha_0}_{\ell_\infty}
=
O_p\del[3]{\sqrt{\frac{\log(m)}{n}}}.\label{firstterm}.
\end{align}
Next, note that
\begin{align*}
\enVert[2]{\del[1]{\Sigma(\tau_0)-\frac{1}{n}X'(\hat{\tau})X(\hat{\tau})}}_{\infty}
\leq
\enVert[2]{\del[1]{\Sigma(\tau_0)-\frac{1}{n}X'(\tau_0)X(\tau_0)}}_{\infty}
+
\enVert[2]{\frac{1}{n}\del[1]{X'(\tau_0)X(\tau_0)-X'(\hat{\tau})X(\hat{\tau})}}_{\infty}
\end{align*}
First, by the subgaussianity of the covariates and the error terms Corollary 5.14 in \cite{vershynin2010introduction} and a union bound yield that\footnote{Alternatively, the arguments on pages A4-A6 in \cite{lee2012lasso} yield a uniform (in $\tau$) upper bound on $\enVert[2]{\del[1]{\Sigma(\tau)-\frac{1}{n}X'(\tau)X(\tau)}}_{\infty}$ of the order $O_p\del[2]{\sqrt{\frac{\log(mn)}{n}}}$ which could also be used resulting in only slightly worse rates.} $\enVert[2]{\del[1]{\Sigma(\tau_0)-\frac{1}{n}X'(\tau_0)X(\tau_0)}}_{\infty}=O_p\del[2]{\sqrt{\frac{\log(m)}{n}}}$. Next, by arguments similar to the ones leading to (\ref{firstterm}), one also has
\begin{align*}
\enVert[2]{\frac{1}{n}\del[1]{X'(\tau_0)X(\tau_0)-X'(\hat{\tau})X(\hat{\tau})}}_{\infty}
\leq
\sup_{1\leq j,k\leq m}\frac{1}{n}\sum_{i=1}^n\envert[2]{X_i^{(j)}X_i^{(k)}}\envert[1]{1_{\cbr{Q_i<\tau_0}}-1_{\cbr{Q_i<\hat{\tau}}}}
\leq
K s\frac{\log(m)}{n}
\end{align*}
on $\mathcal{A}$ by Assumption 4. Therefore, as $s\log(m)^{1/2}/\sqrt{n}\to 0$ (implied by our assumption $s|J(\delta_0)|\log(m)^{1/2}/\sqrt{n}\to 0$), we conclude that
\begin{align}
\enVert[2]{\del[1]{\Sigma(\tau_0)-\frac{1}{n}X'(\hat{\tau})X(\hat{\tau})}}_{\infty}
=
O_p\del[3]{\sqrt{\frac{\log(m)}{n}}}\label{secterm}
\end{align}
Furthermore, by Lemma \ref{Xu}, $\enVert[2]{\frac{1}{n}X'(\hat{\tau})U}_{\ell_\infty}=O_p\del[2]{\sqrt{\frac{\log(m)}{n}}}$ and $\enVert[1]{\hat{\alpha}-\alpha_0}_{\ell_1}=O_p\del[2]{s\sqrt{\frac{\log(m)}{n}}}$ by Theorem 3 in \cite{lee2012lasso}. Finally, $\max_{1\leq j\leq m}\enVert[1]{X^{(j)}}_n=O_p(1)$ by Lemma \ref{scaling} which in conjunction with (\ref{firstterm}) and (\ref{secterm}) yields in (\ref{feqn})
\begin{align*}
\enVert[1]{\hat{\alpha}-\alpha_0}_{\ell_\infty}
=O_p\del[3]{\sqrt{\frac{\log(m)}{n}}}
\end{align*}
where have again used that $s\log(m)^{1/2}/\sqrt{n}\to 0$.

\end{proof}

\begin{proof}[Proof of Lemma \ref{ThetaBound1}]
First, note that
\begin{align*}
\Sigma(\tau)= \begin{pmatrix}
  \Sigma & \tau\Sigma \\
  \tau\Sigma & \tau\Sigma \\
 \end{pmatrix}
\end{align*}
such that by the formula for the inverse of a partitioned matrix with $\Theta=\Sigma^{-1}$
\begin{align}
\Theta(\tau)
=
\Sigma^{-1}(\tau)
=
\begin{pmatrix}
 \frac{1}{1-\tau} \Sigma^{-1} & \frac{-1}{1-\tau}\Sigma^{-1} \\
  \frac{-1}{1-\tau}\Sigma^{-1} & \frac{\tau}{\tau(\tau-1)}\Sigma^{-1} \\
 \end{pmatrix}
=
\frac{1}{1-\tau}\begin{pmatrix}
  1 & -1 \\
  -1& \frac{1}{\tau} \\
 \end{pmatrix}
\otimes \Theta \label{Thetat}.
\end{align}
Thus, it suffices to bound $\enVert[1]{\Sigma^{-1}}_{\ell_\infty}$. To this end, note that $\Sigma=(1-\rho)I+\rho\iota\iota'$ where $\iota$ is a $m\times 1$ vector of ones. Thus, by the Sherman-Morrison-Woodbury formula, $\Sigma^{-1}$ exists and equals
\begin{align*}
\Theta
=
\Sigma^{-1}
=
\frac{1}{1-\rho}\del[3]{I-\frac{\rho\iota\iota'}{1-\rho+\rho m}}
\end{align*}
which implies that (using $\rho/(1-\rho+\rho m)\leq 1$)
\begin{align}
\enVert{\Theta}_{\ell_\infty}
=
\frac{1}{1-\rho}\del[3]{1-\frac{\rho}{1-\rho+\rho m}+\frac{\rho(m-1)}{1-\rho+\rho m}}
=
\frac{1}{1-\rho}\del[3]{\frac{1-3\rho+2m\rho}{1-\rho+m\rho}}
\leq
\frac{2}{1-\rho}\label{Thetalinf}.
\end{align}
Thus, combining (\ref{Thetat}) and (\ref{Thetalinf}) yields the first claim of the lemma. The second claim follows trivially from the first.
\end{proof}

\begin{proof}[Proof of Theorem \ref{thmthresh}]
We consider the zero and non-zero coefficients separately and show that both groups will be classified correctly. Note that by Theorems \ref{thm1} and \ref{thm2} for every $\epsilon>0$ there exists a $C>0$ such that $\enVert[0]{\hat{\alpha}-\alpha}\leq C\lambda$ on a set $\mathcal{D}$ with probability at least $1-\epsilon$. The following arguments all take place on this set. Consider the truly zero coefficients first. To this end, let  $j \in J(\alpha_0)^c$ and note that
\begin{align*}
\max_{j\in J(\alpha_0)^c}|\hat{\alpha}_j|
\leq 
C\lambda
<
2C\lambda
=
H
\end{align*}
such that $\tilde{\alpha}=0$ by the definition of the thresholded scaled Lasso.

Next, consider the non-zero coefficients. To this end, let  $j \in J(\alpha_0)$ and note that
\begin{align*}
|\hat{\alpha}_j|
\geq 
\min_{j\in J(\alpha_0)}|\alpha_j|-|\hat{\alpha}_j-\alpha_{j0}  |
\geq
3C\lambda-C\lambda
=
2C\lambda = H
\end{align*}
such that $|\tilde{\alpha}|=|\hat{\alpha}|\neq 0$ by the definition of the thresholded scaled Lasso and the assumption that $\min_{j\in J(\alpha_0)}|\alpha_j| > 3 C \lambda$
\end{proof}

\begin{proof}[Proof of Theorem \ref{thmbreak}]
Proceeds exactly as the proof of Theorem \ref{thmthresh}.
\end{proof}

\bibliographystyle{chicagoa}
\bibliography{references}

\begin{thebibliography}{}

\bibitem[\protect\citeauthoryear{Akdeniz, Altay-Salih, and Caner}{Akdeniz
  et~al.}{2003}]{akdeniz2003}
Akdeniz, L., A.~Altay-Salih, and M.~Caner (2003).
\newblock Time varying betas help in asset pricing: threshold capm.
\newblock {\em Studies in Nonlinear Dynamics and Econometrics\/}.


\bibitem[\protect\citeauthoryear{Basci and Caner}{Basci and
  Caner}{2006}]{basci2006}
Basci, E. and M.~Caner (2006).
\newblock Are real exchange rates non-stationary or non-linear? evidence from a
  new threshold unit root test.
\newblock {\em Studies in Nonlinear Dynamics and Econometrics\/}.


\bibitem[\protect\citeauthoryear{Baum, Checherita-Westphal, and Rother}{Baum
  et~al.}{2013}]{baum2013debt}
Baum, A., C.~Checherita-Westphal, and P.~Rother (2013).
\newblock Debt and growth: New evidence for the euro area.
\newblock {\em Journal of International Money and Finance\/}~{\em 32},
  809--821.


\bibitem[\protect\citeauthoryear{Bickel, Ritov, and Tsybakov}{Bickel
  et~al.}{2009}]{bickel2009simultaneous}
Bickel, P.~J., Y.~Ritov, and A.~B. Tsybakov (2009).
\newblock Simultaneous analysis of lasso and dantzig selector.
\newblock {\em The Annals of Statistics\/}, 1705--1732.


\bibitem[\protect\citeauthoryear{Caner}{Caner}{2002}]{caner2002}
Caner, M. (2002).
\newblock A note on lad estimation of a threshold model.
\newblock {\em Econometric Theory\/}~{\em 18}, 800--814.


\bibitem[\protect\citeauthoryear{Caner and Hansen}{Caner and
  Hansen}{2001}]{caner2001hansen}
Caner, M. and B.~E. Hansen (2001).
\newblock Threshold autoregression with a unit root.
\newblock {\em Econometrica\/}~{\em 69}, 1555--1596.


\bibitem[\protect\citeauthoryear{Caner and Hansen}{Caner and
  Hansen}{2004}]{caner2004hansen}
Caner, M. and B.~E. Hansen (2004).
\newblock Iv estimation of threshold models.
\newblock {\em Econometric Theory\/}~{\em 20}, 813--843.


\bibitem[\protect\citeauthoryear{Caner, Koehler-Geib, and Grennes}{Caner
  et~al.}{2010}]{fritzi2010}
Caner, M., F.~Koehler-Geib, and T.~Grennes (2010).
\newblock Finding the tipping point when sovereign debt turns bad.
\newblock {\em Sovereign Debt and Financial Crisis\/}, 64--75.


\bibitem[\protect\citeauthoryear{Cecchetti, Mohanty, and Zampolli}{Cecchetti
  et~al.}{2012}]{BIS}
Cecchetti, S.~G., M.~Mohanty, and F.~Zampolli (2012).
\newblock The real effects of debt.
\newblock {\em Bank for International Settlements Working Paper No. 352.\/}.


\bibitem[\protect\citeauthoryear{Friedman, Hastie, and Tibshirani}{Friedman
  et~al.}{2010}]{glmnet}
Friedman, J., T.~Hastie, and R.~Tibshirani (2010).
\newblock Regularization paths for generalized linear models via coordinate
  descent.
\newblock {\em Journal of Statistical Software\/}~{\em 33\/}(1), 1--22.


\bibitem[\protect\citeauthoryear{Hansen}{Hansen}{2000}]{hansen2000}
Hansen, B.~E. (2000).
\newblock Sample splitting and threshold estimation.
\newblock {\em Econometrica\/}~{\em 68}, 575--603.


\bibitem[\protect\citeauthoryear{Hansen}{Hansen}{1996}]{hansen1996}
Hansen, Bruce, E. (1996).
\newblock Inference when a nuisance parameter is not identified under the null
  hypothesis.
\newblock {\em Econometrica\/}~{\em 64}, 432--430.


\bibitem[\protect\citeauthoryear{Hansen}{Hansen}{1999}]{hansen1999}
Hansen, Bruce, E. (1999).
\newblock Threshold effects in non dynamic panels, estimation, testing and
  inference.
\newblock {\em Journal of Econometrics\/}~{\em 93}, 345--368.


\bibitem[\protect\citeauthoryear{Hansen and Seo}{Hansen and
  Seo}{2002}]{hansen2002seo}
Hansen, Bruce, E. and B.~Seo (2002).
\newblock Testing for two-regime threshold cointegration in vector error
  correction models.
\newblock {\em Journal of Econometrics\/}~{\em 110}, 293--318.


\bibitem[\protect\citeauthoryear{Herndon, Ash, and Pollin}{Herndon
  et~al.}{2014}]{herndon2014does}
Herndon, T., M.~Ash, and R.~Pollin (2014).
\newblock Does high public debt consistently stifle economic growth? a critique
  of reinhart and rogoff.
\newblock {\em Cambridge journal of economics\/}~{\em 38\/}(2), 257--279.


\bibitem[\protect\citeauthoryear{Horn and Johnson}{Horn and
  Johnson}{2013}]{horn2013johnson}
Horn, R. and C.~Johnson (2013).
\newblock {\em Matrix Analysis}.
\newblock Cambridge University Press.


\bibitem[\protect\citeauthoryear{Lee, Seo, and Shin}{Lee
  et~al.}{2012}]{lee2012lasso}
Lee, S., M.~H. Seo, and Y.~Shin (2012).
\newblock The lasso for high-dimensional regression with a possible
  change-point.
\newblock {\em arXiv preprint arXiv:1209.4875\/}.


\bibitem[\protect\citeauthoryear{Lin}{Lin}{2014}]{lin}
Lin, T.-C. (2014).
\newblock High-dimensional threshold quantile regression with an application to
  debt overhang and economic growth.
\newblock {\em Working paper, University of Wisconsin Madison\/}.


\bibitem[\protect\citeauthoryear{Linton and Seo}{Linton and
  Seo}{2007}]{linton2007seo}
Linton, O. and M.-H. Seo (2007).
\newblock A smoothed least squares estimator for threshold regression models.
\newblock {\em Journal of Econometrics\/}~{\em 141}, 704--735.


\bibitem[\protect\citeauthoryear{Lounici}{Lounici}{2008}]{lounici2008sup}
Lounici, K. (2008).
\newblock Sup-norm convergence rate and sign concentration property of lasso
  and dantzig estimators.
\newblock {\em Electronic Journal of Statistics\/}~{\em 2}, 90--102.


\bibitem[\protect\citeauthoryear{Montgomery-Smith}{Montgomery-Smith}{1993}]{montgomery1993comparison}
Montgomery-Smith, S. (1993).
\newblock Comparison of sums of independent identically distributed random
  variables.
\newblock {\em Probability and Mathematical Statistics\/}~{\em 14}, 281--285.


\bibitem[\protect\citeauthoryear{{R Development Core Team}}{{R Development Core
  Team}}{2008}]{R}
{R Development Core Team} (2008).
\newblock {\em R: A Language and Environment for Statistical Computing}.
\newblock Vienna, Austria: R Foundation for Statistical Computing.
\newblock {ISBN} 3-900051-07-0.

\bibitem[\protect\citeauthoryear{Reinhart and Rogoff}{Reinhart and
  Rogoff}{2010}]{RR}
Reinhart, C.~M. and K.~S. Rogoff (2010).
\newblock Growth in a time of debt.
\newblock {\em American Economic Review\/}~{\em 100\/}(2), 573--78.


\bibitem[\protect\citeauthoryear{Seo}{Seo}{2006}]{seo2006}
Seo, M.-H. (2006).
\newblock Bootstrap testing for the null of no cointegration in a threshold
  vector error correction model.
\newblock {\em Journal of Econometrics\/}~{\em 134}, 129--150.


\bibitem[\protect\citeauthoryear{Seo}{Seo}{2008}]{seo2008}
Seo, M.-H. (2008).
\newblock Unit root test in a threshold autoregression: asymptotic theory and
  residual-based block bootstrap.
\newblock {\em Econometric Theory\/}~{\em 24}, 1699--1716.


\bibitem[\protect\citeauthoryear{van~de Geer}{van~de
  Geer}{2014}]{van2014highnotes}
van~de Geer, S.~A. (2014).
\newblock Statistical theory for high-dimensional models.
\newblock {\em Lecture Notes\/}.


\bibitem[\protect\citeauthoryear{van~der Vaart and Wellner}{van~der Vaart and
  Wellner}{1996}]{van1996weak}
van~der Vaart, A.~W. and J.~A. Wellner (1996).
\newblock {\em Weak Convergence and Empirical Processes}.
\newblock Springer.


\bibitem[\protect\citeauthoryear{Vershynin}{Vershynin}{2012}]{vershynin2010introduction}
Vershynin, R. (2012).
\newblock {\em Introduction to the non-asymptotic analysis of random matrices,
  in Compressed Sensing: Theory and Applications (edited by Y. Eldar and G.
  Kutyniok)}.
\newblock Cambridge University Press.


\bibitem[\protect\citeauthoryear{Xie}{Xie}{2014}]{knitr}
Xie, Y. (2014).
\newblock knitr: A comprehensive tool for reproducible research in {R}.
\newblock In V.~Stodden, F.~Leisch, and R.~D. Peng (Eds.), {\em Implementing
  Reproducible Computational Research}. Chapman and Hall/CRC.
\newblock ISBN 978-1466561595.

\end{thebibliography}

\end{document}